\algrenewcommand\algorithmicrequire{\textbf{Input:}}
\algrenewcommand\algorithmicensure{\textbf{Output:}}
\begin{document}

\title{
	Improved Density-Based Spatio--Textual Clustering on Social Media
}

\author {
       	 Minh~D.~Nguyen and~Won-Yong Shin,~\IEEEmembership{Senior Member,~IEEE}%
\IEEEcompsocitemizethanks{
\IEEEcompsocthanksitem Minh~D.~Nguyen with the Department of Mobile Systems Engineering, Dankook University, Yongin 16890, Republic of Korea. \protect\\
E-mail: minhnguyen2908@gmail.com.  
\IEEEcompsocthanksitem Won-Yong Shin with the Department of Computer Science and Engineering, Dankook University, Yongin 16890, Republic of Korea.\protect\\
E-mail: wyshin@dankook.ac.kr.}}%

\markboth{Submitted to IEEE Transactions on Knowledge and Data Engineering}
{Minh {et al.}: Improved Density-Based Spatio--Textual Clustering on Social Media}

\newtheorem{definition}{Definition}
\newtheorem{theorem}{Theorem}
\newtheorem{lemma}{Lemma}
\newtheorem{example}{Example}
\newtheorem{corollary}{Corollary}
\newtheorem{proposition}{Proposition}
\newtheorem{conjecture}{Conjecture}
\newtheorem{remark}{Remark}

\newcommand{\red}[1]{{\textcolor[rgb]{1,0,0}{#1}}}

\def \diag{\operatornamewithlimits{diag}}
\def \log{\operatorname{log}}
\def \rank{\operatorname{rank}}
\def \out{\operatorname{out}}
\def \exp{\operatorname{exp}}
\def \arg{\operatorname{arg}}
\def \E{\operatorname{E}}
\def \tr{\operatorname{tr}}
\def \SNR{\operatorname{SNR}}
\def \dB{\operatorname{dB}}
\def \ln{\operatorname{ln}}

\def \be {\begin{eqnarray}}
\def \ee {\end{eqnarray}}
\def \ben {\begin{eqnarray*}}
\def \een {\end{eqnarray*}}

\newcommand{\Pro}[1]{\mathrm{Pr}\left\{#1\right\}}
\newcommand{\LIF}[2]{\tilde{L}_{\pi_1(#1),#2}}
\newcommand{\TIL}[2]{L_{\pi_2(#1),#2}}
\newcommand{\TIF}[2]{T_{\pi_1(#1),#2}}
\newcommand{\KIF}[2]{T_{\pi_1(#1),\pi_2(#2)}}
\newcommand{\snr}{\textsf{snr}}
\newcommand{\sinr}{\textsf{sinr}}
\newcommand{\CanSB}{\mathcal{B}}
\newcommand{\CanSA}{\mathcal{A}}
\newcommand{\Norm}[1]{\left|{#1}\right|}

\newcommand{\argmin}{\mathop{\mathrm{arg\,min}}}
\newcommand{\argmax}{\mathop{\mathrm{arg\,max}}}

\newcommand{\N}{\mathbbmss{N}}
\newcommand{\R}{\mathbbmss{R}}
\newcommand{\C}{\mathbbmss{C}}
\newcommand{\Z}{\mathbbmss{Z}}
\newcommand{\B}[1]{\mathbf{#1}}
\newcommand{\myindent}[1]{
\newline\makebox[#1cm]{}
}
\IEEEtitleabstractindextext{
\begin{abstract}
DBSCAN may not be sufficient when the input data type is heterogeneous in terms of textual description. When we aim to discover clusters of geo-tagged records relevant to a particular point-of-interest (POI) on social media, examining only one type of input data (e.g., the tweets relevant to a POI) may draw an incomplete picture of clusters due to noisy regions. To overcome this problem, we introduce \textsf{DBSTexC}, a newly defined density-based clustering algorithm using {\em spatio--textual} information. We first characterize {\em POI-relevant} and {\em POI-irrelevant} tweets as the texts that include and do not include a POI name or its semantically coherent variations, respectively. By leveraging the proportion of POI-relevant and POI-irrelevant tweets, the proposed algorithm demonstrates much higher clustering performance than the DBSCAN case in terms of $\mathcal{F}_1$ score and its variants. While \textsf{DBSTexC} performs exactly as DBSCAN with the textually homogeneous inputs, it far outperforms DBSCAN with the textually heterogeneous inputs. Furthermore, to further improve the clustering quality by fully capturing the geographic distribution of tweets, we present fuzzy \textsf{DBSTexC} (\textsf{F-DBSTexC}), an extension of \textsf{DBSTexC}, which incorporates the notion of fuzzy clustering into the \textsf{DBSTexC}. We then demonstrate the robustness of \textsf{F-DBSTexC} via intensive experiments. The computational complexity of our algorithms is also analytically and numerically shown.
\end{abstract}

\begin{IEEEkeywords}
Density-based clustering, fuzzy clustering, geo-tagged tweet, point-of-interest (POI), spatio--textual information.
\end{IEEEkeywords}}
\maketitle

\IEEEdisplaynotcompsoctitleabstractindextext

%
\IEEEpeerreviewmaketitle
\section{Introduction}
%
%
%
%
\subsection{Background}
\IEEEPARstart{C}lustering is one of the prominent tasks in exploratory data mining, and a common technique for statistical data analysis. Cluster analysis refers to the partitioning of objects into a finite set of categories or clusters so that the objects in one cluster have high similarity but are clearly dissimilar to objects in other clusters~\cite{han2011data}. Several different approaches to clustering have extensively been introduced in the literature. For example, algorithms such as K-means~\cite{hartigan1979algorithm} and Clustering Large Applications based on Randomized Search
 (CLARANS)~\cite{ng2002clarans} were designed based on a partitioning approach; Gaussian mixture models~\cite{fraley2002model} and COBWEB~\cite{fisher1987improving} belong to a model-based approach; Divisive Analysis (DIANA)~\cite{kaufman2009finding} and Balanced Iterative Reducing and Clustering using Hierarchies (BIRCH)~\cite{zhang1996birch} were developed based on a hierarchical approach; Statistical Information Grid (STING)~\cite{wang1997sting} and Clustering in Quest (CLIQUE)~\cite{clique} were shown as a grid-based approach; and Density-Based Spatial Clustering of Applications with Noise (DBSCAN)~\cite{ester1996density} and Ordering Points to Identify the Clustering Structure (OPTICS)~\cite{ankerst1999optics} are examples of a density-based approach.

Among those approaches, density-based clustering has been extensively studied to discover insights in geographic data~\cite{sander1998density}. Due to the fact that density-based clustering returns clusters of an arbitrary shape, is robust to noise, and does not require prior knowledge on the number of clusters, it is suitable for diverse nature-inspired applications~\cite{kriegel2011density}. For instance, through density-based clustering on geographic data, researchers are capable of finding clusters of restaurants in a city, clusters along roads and rivers, and so forth. Due to its robust performance and intuitive representation, DBSCAN stands out as the most frequently used density-based clustering algorithm. Variations of DBSCAN were also widely studied in~\cite{sander1998density, birant2007, campello2013density, wu2018, bryant2018}. 

Recently, owing to the popularity of social networks (or social media), the volume of spatio--textual data is rising drastically. Hundreds of millions of users on social media tend to share their geo-tagged media contents such as photos, videos, musics, and texts. For example, when users visit a point-of-interest (POI), they are likely to check in, upload photos of their visit, or post geo-tagged textual data via social media to describe their individual idea, feeling or preference relevant to the POI. As an example, more than five hundred million tweets are posted on Twitter~\cite{kwak2010} everyday,\footnote{www.internetlivestats.com/ accessed on November 9, 2017} and approximately 1\% of them are geo-tagged~\cite{morstatter2013sample}, which correspond to five million geo-tagged tweets everyday. As a result, there is a high demand for processing and making good use of spatio--textual information based on massive datasets of real-world social media. While there were several studies on the spatio-textual queries~\cite{de2008keyword, cong2009efficient, yao2010approximate, tao2014fast}, which are to find objects satisfying certain spatial and textual constraints, researches on spatio-textual data analysis by clustering~\cite{choi2017k, wu2016density} have not been closely carried out.

\subsection{Motivation and Main Contributions}
Our study is motivated by the insight that when we find clusters (or geographic regions) from geo-tagged records related to a certain POI on social media, DBSCAN~\cite{ester1996density} and its several variations~\cite{sander1998density, birant2007, campello2013density, wu2018, bryant2018} may not give good clustering results. This comes from the fact that while the geographic region surrounding a POI generally comprises two types of {\em heterogeneous} geo-tags that include and do not include annotated keywords about the POI (defined as \textit{POI-relevant} and \textit{POI-irrelevant} geo-tags, respectively), DBSCAN uses only POI-relevant geo-tags in the process of finding clusters. Therefore, although clusters found by DBSCAN seem to correctly discover groups of POI-relevant geo-tags on the surface, they also blindly include geographic regions which contain a large number of undesired POI-irrelevant geo-tags, thus leading to a poor clustering quality. Hence, in the case of such a heterogeneous input data type, the methodology of DBSCAN using only POI-relevant geo-tags may not be a complete solution to finding clusters. It is essential to perform clustering based on a textually heterogeneous input, including both POI-relevant and POI-irrelevant geo-tags, in order not only to find highly dense clusters of POI-relevant data points but also to exclude the regions with a large number of POI-irrelevant points. 

To this end, we introduce \textsf{DBSTexC}, a novel spatial clustering algorithm based on \textit{spatio--textual} information on Twitter~\cite{vu2016geosocialbound, shin2015new}.\footnote{Even if our focus is on analyzing tweets, the dataset on other social media (or micro-blogs) can also be directly applicable to our research.} We first characterize POI-relevant and POI-irrelevant tweets as the texts that include and do not include a POI name or its semantically coherent variations, respectively. By judiciously considering the proportion of both POI-relevant and POI-irrelevant tweets, \textsf{DBSTexC} is shown to greatly improve the clustering quality in terms of $\mathcal{F}_1$ score and its variants including a geographic factor, compared to that of DBSCAN. This gain comes due to the robust ability of \textsf{DBSTexC} that excludes noisy regions which contain a huge number of undesired POI-irrelevant tweets. Note that \textsf{DBSTexC} can be regarded as a generalization of DBSCAN since it performs exactly as DBSCAN with the textually homogeneous inputs and far outperforms DBSCAN with the heterogeneous inputs. 

In a preliminary version~\cite{minh} of this work, we defined the above clustering problem and proposed an effective \textsf{DBSTexC} algorithm. We note that \textsf{DBSTexC} assumes the resulting clusters having {\em strict} boundaries, which however may not fully exploit the entire geographic features of the data. To further improve the clustering quality based on the observation that the geographic distribution of tweets is generally smooth and thus it is not clear which tweets should be grouped as clusters or be treated as noise, we present a fuzzy DBSTexC (\textsf{F-DBSTexC}) algorithm. \textsf{F-DBSTexC} relaxes the contraints on a point's neighborhood density by allowing an ambiguous tweet to belong to a cluster with a distinct membership degree. We empirically evaluate its performance by showing the superiority over the original \textsf{DBSTexC} in terms of our performance metric. This paper subsumes~\cite{minh} by allowing that decision boundaries for clusters can be fuzzy. The runtime complexity of our two algorithms is also analytically shown and our analysis is numerically validated. Our main contributions are five-fold and summarized as follows:
\begin{itemize}
\item We introduce \textsf{DBSTexC}, a new spatial clustering  algorithm, which intelligently integrates the existing DBSCAN algorithm and the heterogeneous textual information to avoid geographic regions with a large number of POI-irrelevant geo-tagged posts in the resulting clusters.
\item We show the evaluation performance of the proposed clustering algorithm in terms of $\mathcal{F}_1$ score and its variants, while demonstrating its superiority over DBSCAN by up to about 60\%.
\item We also present the \textsf{F-DBSTexC} algorithm, an extension of \textsf{DBSTexC}, which incorporates the notion of fuzzy clustering into the \text{DBSTexC} framework, to fully capture the geographic distribution of tweets in various locations.
\item We demonstrate the robust ability of \textsf{F-DBSTexC} that further improves the clustering quality via intensive experiments, compared to that of \textsf{DBSTexC} by up to about 27\% for several POIs that are located especially in sparsely-populated areas. 
\item We analytically and numerically show the computational complexity of our proposed algorithms when two different implementation approaches are employed. 
\end{itemize} 

This paper is the first attempt to integrate the existing DBSCAN and the heterogeneous textual information, and thus our methodology sheds light on how to design highly-improved spatial clustering algorithms by leveraging spatio--textual information on social media.

\subsection{Organization}
The rest of the paper is organized as follows. In Section \ref{prior}, we review the prior work related to our research. Section \ref{data acq} describes how to collect POIs and search for POI-relevant tweets. In Section \ref{method}, we present the proposed \textsf{DBSTexC} algorithm and empirically evaluate its performance. The computational complexity of our algorithm is analytically shown in Section \ref{result}. Section \ref{fuzzy} introduces \textsf{F-DBSTexC}, an extended version of \textsf{DBSTexC}. Finally, Section \ref{summary} summarizes the paper with some concluding remarks.

\subsection{Notations}
The list of all the notations used in our work is presented in Table \ref{notation}. Some notations will be more precisely defined as they appear in later sections of this paper. 

\begin{table}[!t]
\centering
\captionsetup{font=normalsize}
\caption{Summary of notations}
\label{notation}
\begin{tabular}{p{1.2cm}| p{6.8cm}}
\hline
\textbf{Notation} & \textbf{Description} \\ \hline
$\epsilon$                              & Radius of a point's neighborhood                        \\ 
$N_{\text{min}}$                        & Minimum allowable number of POI-relevant tweets in an $\epsilon$-neighborhood of a point\\ 
$N_{\text{max}}$                        & Maximum allowable number of POI-irrelevant tweets in an $\epsilon$-neighborhood of a point\\ 
$\eta$                                  & Precision threshold for a query region    \\ 
$\mathcal{X}$                           & Set of POI-relevant tweets                \\ 
$\mathcal{Y}$                           & Set of POI-irrelevant tweets              \\ 
$\mathcal{X}_{\epsilon}(p)$             & Set of POI-relevant tweets contained in an $\epsilon$-neighborhood of point $p$ \\ 
$\mathcal{Y}_{\epsilon}(p)$             & Set of POI-irrelevant tweets contained in an $\epsilon$-neighborhood of point $p$ \\ 
$\text{dist}(p,q)$                      & Euclidean distance between points $p$ and $q$            \\ 
$C$                                     & A cluster with label $C$      \\ 
$A$                                     & Area of the geographical region covered by clusters       \\ 
$\bar{A}$                           & Normalized area of the geographical region covered by clusters \\ 
$\alpha$                                & Area exponent         \\ 
$\mathcal{F}_1$                         & $\mathcal{F}_1$ score \\ 
$n$                                     & Number of POI-relevant tweets     \\ 
$m$                                     & Number of POI-irrelevant tweets   \\ 
$\mu_p$                                 & Fuzzy score of point $p$   \\ \hline

\end{tabular}
\end{table}

\section{Previous Work}
\label{prior}
Our clustering algorithm is related to four broad areas of research, namely traditional spatial clustering, spatio--textual similarity search, clustering based on spatial and non-spatial attributes, and fuzzy clustering.

\textbf{Spatial clustering.} A variety of spatial clustering algorithms have been developed in the literature. Several algorithms using a partitioning approach were introduced and widely utilized in~\cite{hartigan1979algorithm,ng2002clarans,park2009simple}. Even though such algorithms are useful for finding sphere-shaped clusters, they require prior knowledge on the number of clusters and thus are unable to find clusters of arbitrary shapes. Next, hierarchical clustering algorithms~\cite{kaufman2009finding,zhang1996birch} can be further divided into two types based on the following clustering processes: the agglomerative (bottom-up) process and the divisive (top-down) process. Their strengths lie in the hierarchical relation among clusters and an easy interpretation. However, hierarchical clustering does not have well-defined termination criteria, and if some objects are mis-clustered during the growth of the hierarchy, then such objects will remain in a certain wrong cluster until the clustering process is terminated. In addition, from a density-based point of view, the DBSCAN algorithm~\cite{ester1996density} uses a series of density-connected points to form density-based clusters. Since DBSCAN does not require the number of clusters as an input parameter, and does not assume any underlying probability density behind the clusters, it can discover clusters of arbitrary shapes. As follow-up studies on DBSCAN, numerous algorithms have been developed as follows. GDBSCAN~\cite{sander1998density} generalized DBSCAN by extending the notion of a neighborhood over the traditional $\epsilon$-neighborhood and by using different measures to define the ``cardinality" of the neighborhood; ST-DBSCAN~\cite{birant2007} was designed by discovering clusters based on spatial and temporal attributes; HDBSCAN~\cite{campello2013density} was presented by generating a density-based clustering hierarchy and then extracting a set of significant clusters based on a measure of stability; DCPGS~\cite{wu2018} revised DBSCAN in such as way that places are clustered based on both spatial and social distances between them (i.e., the geo-social network data); and RNN-DBSCAN~\cite{bryant2018} was proposed by defining observation density using reverse nearest neighbors, which leads to a reduction in complexity and is preferable when clusters have high variations in density. Unlike the aforementioned studies, our work aims to integrate the existing DBSCAN and the heterogeneous textual information to avoid noisy regions having numerous POI-irrelevant geo-tags. 

\textbf{Spatio--textual similarity search.} It is of paramount importance to find spatially and textually closest objects to query objects. To offer compelling solutions to this problem, several algorithms~\cite{de2008keyword, cong2009efficient, yao2010approximate, tao2014fast} were introduced. Particularly, a method to answer queries containing a location and a set of keywords was presented in~\cite{de2008keyword}. Next, an indexing framework for processing top-\textit{k} query that considers both spatial proximity and text relevancy was introduced in~\cite{cong2009efficient}. Although these algorithms study the spatio-textual distance between objects, they are inherently different from our proposed approach, which finds density-based spatio--textual clusters using the textually heterogeneous input data type on social media such as Twitter.

\textbf{Clustering based on spatial and non-spatial attributes.} There have been recent studies on the use of spatial and non-spatial attributes to improve the clustering performance in various applications. Spectral clustering was applied in~\cite{van2013community} to identify clusters among gang members based on both the observation of social interactions and the geographic locations of individuals. On the other hand, another clustering method was presented in~\cite{wang2011spatial} to discover clusters that are dense spatially and have high spatial correlation based on their non-spatial attributes. 

\textbf{Fuzzy clustering.} Most of fuzzy clustering algorithms were built upon the fuzzy c-means algorithm~\cite{bezdek1984fcm,miyamoto2008algorithms,li2008}. These algorithms integrate crisp clustering techniques and the theory of fuzzy sets so as to discover clusters whose objects belong to multiple clusters simultaneously with different degrees of membership~\cite{smiti2013soft,zahid2001fuzzy}. However, fuzzy density-based clustering algorithms may or may not allow overlapping clusters. Fuzzy neighborhood DBSCAN (FN-DBSCAN)~\cite{nasibov2009robustness} was proposed by introducing the definition of fuzzy neighborhood size along with various neighborhood membership functions to capture different neighborhood sensitivities. Three extensions of DBSCAN were also presented in~\cite{ienco2016fuzzy}, while producing clusters with distinct fuzzy and overlapping properties. A survey on popular fuzzy density-based clustering algorithms was presented in~\cite{ulutagay2012fuzzy}.

Note that results presented below partially overlap with our prior conference paper~\cite{minh}. The present paper significantly extends the earlier work in several ways, including the proof of correctness of \textsf{DBSTexC} algorithm, the revised analysis of the computational complexity, and the introduction to \textsf{F-DBSTexC}, an extension of \textsf{DBSTexC} that incorporates the notion of fuzzy clustering to capture the entire geographic features of the data.

\section{Data Acquisition and Processing}
\label{data acq}
We first explain how we acquire the Twitter data and choose POIs. Then, for every POI, we outline our approach to searching for POI-relevant and POI-irrelevant tweets.

\subsection{Collecting Twitter Data}
We utilize the Twitter Streaming Application Programming Interface (API)~\cite{twittermonitor}, which is a widely popular tool to collect data from Twitter for various research purposes such as topic modeling, network analysis, and statistical content analysis. Streaming API returns tweets that match a query written by an API user. An interesting finding is that even if Twitter Streaming API returns at most a 1\% sample of all the tweets created at a given moment, it gives an almost complete set of \emph{geo-tagged} tweets despite sampling~\cite{morstatter2013sample}.

The dataset that we use includes a large number of geo-tagged records (i.e., tweets) collected from Twitter users from May 31, 2016 to June 30, 2016 in the UK. We deleted the contents that were generated by the users posting more than three times consecutively at the same exact location, as those were likely to be  products of other services such as Tweetbot, TweetDeck, Twimight, and so forth. Moreover, we notice that each record consists of a number of attributes that can be distinguished by their associated field names. For data analysis, we select the following three attributes from the collected tweets:
\begin{itemize}
\item \emph{text}: actual UTF-8 text of the tweet;
\item \emph{lat}: latitude of the location where the tweet was posted;
\item \emph{lon}: longitude of the location where the tweet was posted.
\end{itemize}

\subsection{Collecting POIs}
We select POIs as popular point locations that people may be interested in and are likely to visit. Moreover, for the geographic diversity, we choose POIs from both populous metropolitan areas and sparsely populated cities. The names of chosen POIs and their geographic regions are shown in Table \ref{tb1}. Based on the UK gridded population dataset~\cite{UKgridded}, we are able to approximate the population as follows: the population density for the areas surrounding POIs in London, Edinburgh, and Oxford is $>$7,000/km\textsuperscript{2}, $<$2,000/km\textsuperscript{2}, and $<$1,000/km\textsuperscript{2}, respectively. 
\begin{center}
\begin{table}[!t]
\centering
\captionsetup{font=normalsize}
\caption{POI names and the corresponding geographic regions}
\label{tb1}
\begin{tabular}{|c|c|}
\hline
\textbf{POI name}    & \textbf{Region} \\ \hline
Hyde Park            & Populous metropolitan area   \\ \hline
Regent's Park        & Populous metropolitan area   \\ \hline
University of Oxford & Sparsely populated city          \\ \hline
Edinburgh Castle     & Sparsely populated city          \\ \hline
\end{tabular}
\end{table}
\end{center}

\subsection{Searching POI-Relevant Tweets}
Since Twitter users tend to convey their interest in a POI by mentioning or tagging it in their tweets, we are able to collect all POI-relevant tweets by querying for keywords related to the POI in the text field of the collected tweets. However, when users type the actual terms of each POI in their tweets, they may misspell or implicitly mention the POI name. We thus implement a keyword-based search for {\em semantically coherent} variations of a POI, which would contain its shortened names, its informal names (if any), and so forth. For a POI formed into a large geographic area, we include names of famous attractions inside the POI to increase the search accuracy. The list of search queries for four POIs shown in Table \ref{tb1} is summarized in Table \ref{POIquery}.\footnote{Search queries for more POIs to be added in Section \ref{fuzzy} are not shown for the sake of brevity.} Therefore, the dataset can be divided into two subgroups of geo-tagged tweets that include and do not include the annotated POI keywords, which correspond to POI-relevant and POI-irrelevant tweets, respectively.
\begin{center}
\begin{table}[!t]
\centering
\captionsetup{font=normalsize}
\caption{POI names and their search queries}
\label{POIquery}
\begin{tabular}{|p{1.5cm}| p{6.5cm}|}
\hline
\textbf{POI name}    & \multicolumn{1}{c|}{\textbf{Search queries}}                                                           \\ \hline
Hyde Park            & \textit{Hyde Park, Kensington Gardens, Royal Park}                                                     \\ \hline
Regent's Park        & \textit{Regent’s Park, London Zoo, tasteoflondon}                                                      \\ \hline
University of Oxford & \textit{Oxford Univ, oxford univ, Univ Oxford}                                                         \\ \hline
Edinburgh Castle     & \textit{Edinburgh Castle, edinburgh castle, EdinburghCastle} \\ \hline
\end{tabular}
\end{table}
\end{center}

\section{Proposed Methodology}
\label{method}
To elaborate on the proposed methodology, we first present the important definitions and analysis that are essential to the design of our algorithm, and show the analysis that validates the correctness of our algorithm. Then, we elaborate on our DBSTexC algorithm.

\subsection{Definitions}
We start by introducing the definition of a query region. A query region is defined as a geographic area from which we collect the geo-tagged tweets for a particular POI. Generally, we are likely to find both POI-relevant and POI-irrelevant tweets inside the region. Nevertheless, since the relevance of information to the POI varies according to the geographic distance between the POI and the locations where the data are generated, tweets posted at locations far away from the POI are likely to have little or no textual description for the POI. We thus focus only on a region that contains almost all relevant tweets but omit the majority of irrelevant tweets that were posted geographically far from the POI, which would lead to a reduced computational complexity. Motivated by this observation, we define a query region as follows: 

\textit{Definition 1 (Query region):}
Given a POI, a query region is a circle whose center corresponds to the center point of the POI's administrative bounding box provided by Google Maps. The radius of the circle is then increased stepwise until Precision of the query region is lower than a threshold $\eta$, where $\eta$ can be set appropriately based on POI types. Here, $\text{Precision}$ of the query region is the ratio of true positives (the number of POI-relevant tweets in the query region) to all predicted positives (the number of all retrieved geo-tagged tweets in the query region).

In Fig. \ref{queryRegion}, we show an example of the query region for Hyde Park. As shown in the figure, starting from the center of the POI, we continue on expanding the query region until the condition in Definition 1 is fulfilled.   

\begin{figure}[!t]
\centering
\captionsetup{font=normalsize}
\includegraphics[width=1\linewidth]{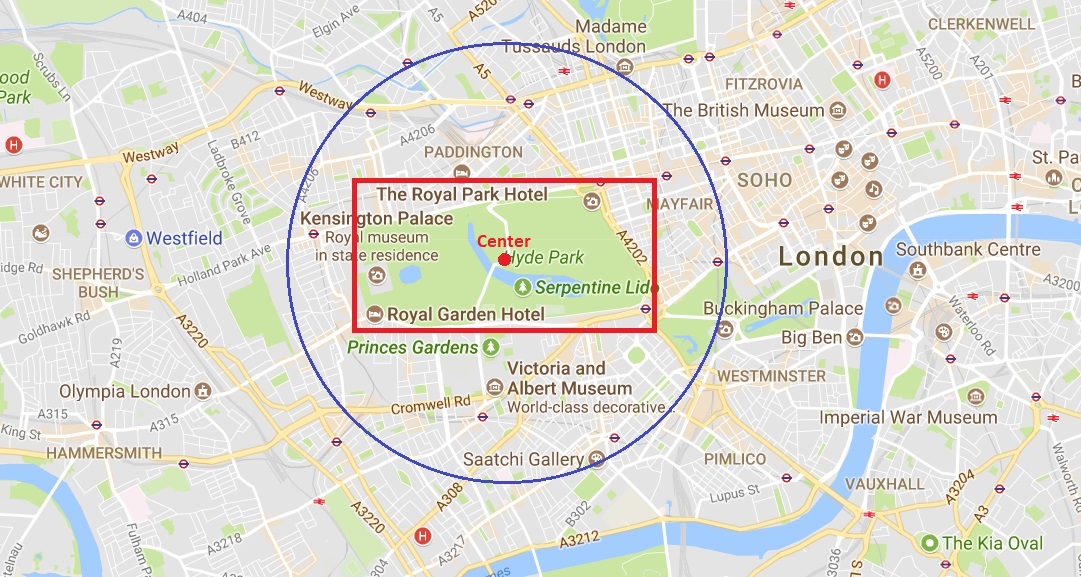}
\caption{An example of the query region for Hyde Park. The red rectangle is the administrative bounding box, whose center is denoted by the red dot, and the blue circle is the query circle that fulfills the condition in Definition 1.}
\label{queryRegion}
\end{figure}

Similar to DBSCAN \cite{ester1996density}, we exploit the neighborhood of a point (See Definition 2) and a series of density-connected points (See Definition 6) to find clusters. However, unlike DBSCAN, we present a new parameter $N_{\text{max}}$ to limit the number of {\em POI-irrelevant} tweets, resulting in an improved clustering quality. Hence, we can acquire a core point which has not only at least $N_{\text{min}}$ POI-relevant tweets but also at most $N_{\text{max}}$ POI-irrelevant tweets inside its neighborhood (See Definition 3). The result of \textsf{DBSTexC}, whose clusters are composed of connected neighborhoods of core points, would significantly outperform DBSCAN that uses only POI-relevant tweets, which is numerically shown in Section \ref{result}. 

\textit{Definition 2 ($\epsilon$-neighborhood of a point):}
Let $\mathcal{X}$ and $\mathcal{Y}$ denote the sets of POI-relevant and POI-irrelevant tweets, respectively. For a point $p\in \mathcal{X}$, the sets of $\epsilon$-neighborhoods containing POI-relevant and POI-irrelevant tweets, denoted by $\mathcal{X}_{\epsilon}(p)$ and $\mathcal{Y}_{\epsilon}(p)$, are defined as the geo-tagged tweets within a scan circle centered at $p$ with radius $\epsilon$ and are given by  
\begin{align*}
\mathcal{X}_{\epsilon}(p) = \{q\in \mathcal{X} | \text{dist}(p,q) \le \epsilon\} \\
\mathcal{Y}_{\epsilon}(p) = \{q\in \mathcal{Y} | \text{dist}(p,q) \le \epsilon\},
\end{align*}
respectively, where dist($p,q$) is the geographic distance between coordinates $p$ and $q$. Note that we focus on the $\epsilon$-neighborhood only for POI-relevant tweets while neglecting the neighborhood of POI-irrelevant tweets, since our \textsf{DBSTexC} algorithm finds clusters based on a series of $\epsilon$-neighborhoods of only POI-relevant tweets. 

\textit{Definition 3 (Core point):} 
A point $p\in \mathcal{X}$ is a core point if it fulfills the following condition:
\begin{equation*}
|\mathcal{X}_{\epsilon}(p)| \ge N_{\text{min}} \; \text{and} \; |\mathcal{Y}_{\epsilon}(p)| \le N_{\text{max}}.
\end{equation*}

\subsection{Analysis}
The analytical part essentially follows the same line as  that in~\cite{clustering}, but is modified so that it fits into our framework. In this subsection, we present fundamental definitions that provide the basis for our \textsf{DBSTexC} algorithm to find clusters according to a density-based approach using spatio--textual information. Then, we analytically validate the correctness of our algorithm by introducing two lemmas.

\textit{Definition 4 (Directly density-reachable):}
A point $p$ is directly density-reachable from a core point $q$ with respect to (w.r.t.) $\epsilon$, $N_{\text{min}}$, and $N_{\text{max}}$ if
\begin{equation*}
p\in \mathcal{X}_{\epsilon}(q) \; \text{or} \; p\in \mathcal{Y}_{\epsilon}(q).
\end{equation*}

If point $p$ is directly density-reachable from a point $q$ and is a core point itself, then $q$ is also directly density-reachable from $p$. Therefore, it is obvious that ``directly density-reachable" is symmetric for pairs of core points.

\textit{Definition 5 (Density-reachable):}
A point $p$ is density-reachable from a point $q$ w.r.t. $\epsilon$, $N_{\text{min}}$, and $N_{\text{max}}$ if there is a chain of points $p_1,\cdots,p_n$, $p_1=q$, and $p_n=p$ such that $p_{i+1}$ is directly density-reachable from $p_i$.

The density-reachable relation is not symmetric. For example, given a directly density-reachable chain as in Definition 5, the points $p_1,\cdots,p_{n-1}$ are all core points. However, $p_n$ can be either a border point or a core point. If $p_n$ is a core point, then point $p_1$ is also symmetrically density-reachable from $p_n$. Therefore, if the two points $p$ and $q$ are density-reachable from each other, then they are core points and belong to the same cluster.  

\textit{Definition 6 (Density-connected):}
A point $p$ is density-connected to a point $q$ w.r.t. $\epsilon$, $N_{\text{min}}$, and $N_{\text{max}}$ if there is a point $o$ such that both $p$ and $q$ are density-reachable from $o$ w.r.t. $\epsilon$, $N_{\text{min}}$, and $N_{\text{max}}$.

With the above six definitions, we are now ready to define a new notion of a cluster. In brief, a cluster (See Definition 7) is defined as a set of density-connected points. Noise points (See Definition 8) are defined as the set of points not belonging to any clusters. 

\textit{Definition 7 (Cluster):} 
Let $\mathcal{T}$ denote the dataset of all retrieved geo-tagged tweets. Then, a cluster $C$ w.r.t. $\epsilon$, $N_{\text{min}}$, and $N_{\text{max}}$ is a non-empty subset of the dataset $\mathcal{T}$ satisfying the following conditions:
\begin{enumerate}
\item \label{imt51} $\forall p \in \mathcal{X}$ and $q \in \mathcal{T}$: if $p \in C$ and $q$ is density-reachable from $p$ w.r.t. $\epsilon$, $N_{\text{min}}$, and $N_{\text{max}}$, then $q \in C$.
\item $\forall p,q \in C$: $p$ is density-connected to $q$ w.r.t. $\epsilon$, $N_{\text{min}}$, and $N_{\text{max}}$.
\end{enumerate}

\textit{Definition 8 (Noise):}
Let $C_1,\cdots,C_k$ be the clusters of the dataset $\mathcal{T}$ w.r.t. $\epsilon_i$, $N_{\text{min}}^i$, and $N_{\text{max}}^i$ for $i\in\{1,\cdots,k\}$. Then, noise is defined as the set of points in $\mathcal{T}$ not belonging to any cluster $C_i$, i.e., $\{p\in \mathcal{T} | p \notin C_i, \forall i\}$.

Given the above eight definitions, our \textsf{DBSTexC} algorithm can then be intuitively stated as a two-step clustering algorithm using spatio--textual information. The first step is to choose an arbitrary POI-relevant tweet satisfying the core point condition as a seed. The second step is to retrieve all points that are density-reachable from the seed, thus obtaining the cluster containing the seed. To formally justify the credibility of our algorithm, we establish the following two lemmas.
\begin{lemma}
\label{lemma1}
Let $p$ be a point in $\mathcal{X}$, $|\mathcal{X}_{\epsilon}(p)|\ge N_{\text{min}}$, and $|\mathcal{Y}_{\epsilon}(p)|\le N_{\text{max}}$. Then, the set $O = \{o|o\in \mathcal{T}$ and $o$ is density-reachable from $p$ w.r.t. $\epsilon$, $N_{\text{min}}$, and $N_{\text{max}}$ $\}$ is a cluster w.r.t. $\epsilon$, $N_{\text{min}}$, and $N_{\text{max}}$.
\end{lemma}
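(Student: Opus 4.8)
The plan is to verify directly that $O$ satisfies Definition 7, that is, that $O$ is non-empty and meets both the maximality condition (the first condition) and the density-connectivity condition (the second condition). The hypotheses $|\mathcal{X}_{\epsilon}(p)| \ge N_{\text{min}}$ and $|\mathcal{Y}_{\epsilon}(p)| \le N_{\text{max}}$ state precisely that $p$ is a core point in the sense of Definition 3, and this single point $p$ will act as the common anchor throughout. Non-emptiness is immediate: since $p$ is a core point it is density-reachable from itself (a trivial chain is admissible because the source of any chain must be a core point, which $p$ is), so $p \in O$ and $O \neq \emptyset$.

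For the maximality condition I would fix an arbitrary $p' \in \mathcal{X}$ with $p' \in O$ and an arbitrary $q \in \mathcal{T}$ that is density-reachable from $p'$, and show that $q \in O$. Membership $p' \in O$ means $p'$ is density-reachable from $p$, so there is a chain from $p$ to $p'$; by assumption there is also a chain from $p'$ to $q$. The key step is a transitivity property for density-reachability: concatenating these two chains at $p'$ produces a single chain from $p$ to $q$, witnessing that $q$ is density-reachable from $p$ and hence $q \in O$. This concatenation is legitimate precisely because the junction $p'$ is a core point --- it is the source of the chain realizing the density-reachability of $q$, and Definition 4 requires the source of every directly density-reachable step to be a core point, so the two chains splice together into a valid chain.

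The connectivity condition is the most direct. For any two points $p', q' \in O$, both are by construction density-reachable from the same anchor $p$. Choosing $o = p$ in Definition 6 then shows immediately that $p'$ is density-connected to $q'$, so the second condition holds with no additional argument.

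I expect the transitivity step inside the maximality proof to be the main obstacle, since everything else reduces to unwinding the definitions. The care needed there lies entirely in checking that the splicing point $p'$ is a core point; without this one cannot guarantee that the first directly density-reachable step of the second chain remains legal when it is reinterpreted as an interior step of the combined chain. Once transitivity is in hand, the non-emptiness and connectivity parts follow almost trivially from the observation that every element of $O$ is density-reachable from the single core point $p$.
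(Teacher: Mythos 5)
Your proof is correct, but it takes a genuinely different route from the paper's. You verify Definition 7 for $O$ directly and from the bottom up: non-emptiness (the core point $p$ is directly density-reachable from itself, hence $p \in O$), maximality (splice a chain from $p$ to $p'$ with a chain from $p'$ to $q$; the splice is legal because $p'$, as the source of the second chain's first step, must be a core point by Definition 4 --- and in the degenerate case $q = p'$ membership is immediate), and connectivity (every element of $O$ is density-reachable from the single anchor $p$, so Definition 6 holds with $o = p$). The paper argues top-down instead: it asserts that the core point $p$ ``is contained in some cluster $C$,'' invokes the maximality condition (Definition 7-1) of that cluster $C$ to conclude $O \subseteq C$, and stops there. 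The paper's argument is shorter, but it rests on two things your proof avoids: the unproven premise that a cluster containing $p$ exists in the first place, and the fact that the inclusion $O \subseteq C$ alone does not establish that $O$ itself satisfies the cluster conditions --- the equality $O = C$ is only obtained afterwards, in the proof of Lemma 2. Your verification is therefore more self-contained and, arguably, more rigorous; its price is that you must prove transitivity of density-reachability explicitly, and you correctly identified the chain-splicing step (checking that the junction point is a core point) as the one place where real care is required.
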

\begin{proof}
Since $p\in \mathcal{X}$, $|\mathcal{X}_{\epsilon}(p)|\ge N_{\text{min}}$ and $|\mathcal{Y}_{\epsilon}(p)|\le N_{\text{max}}$, $p$ is a core point and thus is contained in some cluster $C$. We need to show that $O\subseteq C$. Definition 7-1 indicates that all points that belong to $O$ should also belong to $C$, resulting in $O\subseteq C$. This completes the proof of this lemma.  
\end{proof}
\begin{lemma}
Let $C$ be a cluster w.r.t. $\epsilon$, $N_{\text{min}}$, and $N_{\text{max}}$. Let $p$ be any point in $C\cap \mathcal{X}$ with $|\mathcal{X}_{\epsilon}(p)|\ge N_{\text{min}}$ and $|\mathcal{Y}_{\epsilon}(p)|\le N_{\text{max}}$. Then, $C$ is equal to the set $O = \{o| o$ is density-reachable from $p$ w.r.t. $\epsilon$, $N_{\text{min}}$, and $N_{\text{max}}\}$.
\end{lemma}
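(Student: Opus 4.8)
The plan is to establish the set equality $C = O$ by proving the two inclusions $O \subseteq C$ and $C \subseteq O$ separately. This mirrors the structure of the previous lemma, but now I expect to need both clauses of Definition 7 rather than just the first.

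First I would dispose of the inclusion $O \subseteq C$, which should be almost immediate and essentially a restatement of Lemma 1. By hypothesis $p \in C \cap \mathcal{X}$ with $|\mathcal{X}_{\epsilon}(p)| \ge N_{\text{min}}$ and $|\mathcal{Y}_{\epsilon}(p)| \le N_{\text{max}}$, so $p$ is a core point lying in $C$. Any $o \in O$ is by definition density-reachable from $p$, and since $p \in \mathcal{X}$ and $p \in C$, the condition in Definition 7-1 forces $o \in C$. Hence every element of $O$ belongs to $C$.

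The substantive direction is $C \subseteq O$. Here I would fix an arbitrary $q \in C$ and aim to show that $q$ is density-reachable from $p$. Since both $p, q \in C$, Definition 7-2 guarantees that $p$ is density-connected to $q$, so there exists a witness point $o$ from which both $p$ and $q$ are density-reachable. The key obstacle is that density-reachability is not symmetric in general (as noted after Definition 5), so knowing that $p$ is density-reachable from $o$ does not by itself yield the reverse direction I need. This is exactly where the core-point hypothesis on $p$ becomes essential: a density-reachable chain $o = p_1, \dots, p_n = p$ has all intermediate points $p_1, \dots, p_{n-1}$ as core points by Definition 4, and whenever the terminal point $p_n = p$ is itself a core point, each step of the chain can be reversed by the symmetry of ``directly density-reachable'' for pairs of core points. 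Since $p$ is a core point, reversing the chain shows that $o$ is density-reachable from $p$ (the degenerate case $o = p$ being trivial).

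Finally, I would chain the two reachabilities together. Having established that $o$ is density-reachable from $p$ and that $q$ is density-reachable from $o$, transitivity of density-reachability—obtained simply by concatenating the two chains of directly density-reachable points from Definition 5 at their common endpoint $o$—yields that $q$ is density-reachable from $p$, i.e., $q \in O$. Combining this with the first inclusion gives $C = O$. The only delicate point in the whole argument is the reversal step in the third paragraph, and it is precisely the assumption that $p$ is a core point that makes it go through; everything else is a direct appeal to the definitions.
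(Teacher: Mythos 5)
Your proof is correct and follows essentially the same route as the paper's: the inclusion $O \subseteq C$ is dispatched via the Lemma~1 argument, and $C \subseteq O$ is obtained by invoking Definition~7-2 to produce a witness point, reversing the reachability chain from that witness to $p$ using the core-point symmetry, and concatenating chains to conclude $q \in O$. If anything, you are more careful than the paper, which simply asserts that the witness is a core point and that reachability between core points is symmetric, whereas you justify both facts from the structure of the chain and handle the degenerate case explicitly.
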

\begin{proof}
We need to show that $C = O$. Similarly as in the proof for \textbf{Lemma 1}, we have
\begin{align}
\label{lemma1}
O\subseteq C. 
\end{align}
Therefore, to show that $C = O$, we need to prove that $C\subseteq O$. 
Let $q$ be an arbitrary point in $C$.
Since $p\in C$, $q$ is density-connected to $p$ from Definition 7-2. It means that there is a core point $m\in C$ such that $p$ and $q$ are density-reachable from $m$ (see Definition 6). However, $p$ and $m$ are both core points, which represents that $p$ is density-reachable from $m$ if and only if $m$ is density-reachable from $p$. This shows that $q$ is density-reachable from $p$, which indicates that $q\in O$. Therefore, it follows that
\begin{align}
\label{lemma2}
C\subseteq O. 
\end{align}
From (\ref{lemma1}) and (\ref{lemma2}), we finally have
\begin{equation*}
C = O, 
\end{equation*}
which completes the proof of this lemma.
\end{proof}

\subsection{\textsf{DBSTexC} Algorithm} \label{DBAlgo}
In this subsection, we describe our \textsf{DBSTexC} algorithm that makes use of both POI-relevant and POI-irrelevant tweets. In the clustering process, \textsf{DBSTexC} starts with a random point $p_i$ in $\mathcal{X}$ (i.e., the set of POI-relevant tweets) for $i \in \{1,...,|\mathcal{X}|\}$ and retrieves all points that are density-reachable from $p_i$ with respect to $\epsilon$, $N_{\text{min}}$, and $N_{\text{max}}$ (See Algorithm \ref{alg:Tex1}). If $p_i$ is a core point, then a cluster is formed and expanded until all points that belong to the cluster are included (See Algorithm \ref{alg:Tex2}). Otherwise, \textsf{DBSTexC} moves on to the next point in the set of POI-relevant tweets.
\begin{algorithm}[!t]
\small
\caption{\textsf{DBSTexC}($\mathcal{X}$,$\mathcal{Y}$, $\epsilon$, $N_{\text{min}}$, $N_{\text{max}}$)}
\label{alg:Tex1}
\begin{algorithmic}
\Require $\mathcal{X}$,$\mathcal{Y}$, $\epsilon$, $N_{\text{min}}$, $N_{\text{max}}$
\Ensure Clusters with different labels $C$
\algrenewcommand\algorithmicensure{\textbf{Initialization:}}
\Ensure $C\gets 0$; $n\gets |\mathcal{X}|$; $m\gets |\mathcal{Y}|$; $p_i$ is a point in the set $\mathcal{X}$
\For{each $p_i$}
\If{$p_i$ is not visited}
\State Mark $p_i$ as visited
\State $[\mathcal{X}_\epsilon (p_i), \mathcal{Y}_\epsilon (p_i)] =$ RangeQuery$(p_i)$
\If{$|\mathcal{X}_\epsilon (p_i)|\ge N_{\text{min}} \  \& \  |\mathcal{Y}_\epsilon (p_i)|\le N_{\text{max}}$}
\State $C\gets C+1$
\State ExpandCluster$(p_i,\mathcal{X}_\epsilon (p_i),\mathcal{Y}_\epsilon (p_i))$
\EndIf
\EndIf
\EndFor 
\end{algorithmic}
\end{algorithm}

In Algorithm \ref{alg:Tex1}, RangeQuery() is a function that returns points in an $\epsilon$-neighborhood, where it can be implemented using spatial access methods, i.e., \emph{R-trees} and \emph{k-d trees}. By searching for both POI-relevant and POI-irrelevant points along with two parameters $N_{\text{min}}$ and $N_{\text{max}}$ to determine whether to create a new cluster and/or expand the current cluster, our proposed algorithm effectively excludes noisy areas from its clusters.
\begin{algorithm}[!h]
\small
\caption{ExpandCluster$(p_i, \mathcal{X}_\epsilon (p_i), \mathcal{Y}_\epsilon (p_i))$}
\label{alg:Tex2}
\begin{algorithmic}
\Require $p_i, \mathcal{X}_\epsilon (p_i), \mathcal{Y}_\epsilon (p_i)$
\Ensure Cluster $C$ with all of its members
\algrenewcommand\algorithmicensure{\textbf{Initialization:}}
\State Add $p_i$ to the current cluster
\For{each point $p_j$ in the set $\mathcal{X}_\epsilon (p_i)$}
\If {$p_j$ is not visited}
\State Mark $p_j$ as visited
\State $[\mathcal{X}_\epsilon (p_j), \mathcal{Y}_\epsilon (p_j)] =$ RangeQuery$(p_j)$
\If{$|\mathcal{X}_\epsilon (p_j)|\ge N_{\text{min}}\  \& \  |\mathcal{Y}_\epsilon (p_j)|\le N_{\text{max}}$}
\State $\mathcal{X}_\epsilon (p_i) = \mathcal{X}_\epsilon (p_i) \cup \mathcal{X}_\epsilon (p_j)$
\State $\mathcal{Y}_\epsilon (p_i) = \mathcal{Y}_\epsilon (p_i) \cup \mathcal{Y}_\epsilon (p_j)$
\EndIf
\EndIf
\If{$p_j$ does not have a label}
\State Add $p_j$ to the current cluster
\EndIf
\EndFor
\If{$|\mathcal{Y}_\epsilon (p_i)|\ne 0$}
\For{each point $q_j$ in the set $\mathcal{Y}_\epsilon (p_i)$}
\If{$q_j$ is not visited}
\State Mark $q_j$ as visited
\If{$q_j$ does not have a label}
\State Add $q_j$ to the current cluster
\EndIf
\EndIf
\EndFor
\EndIf 
\end{algorithmic}

\end{algorithm}

In Algorithm \ref{alg:Tex2}, for every point $p_j \in \mathcal{X}_\epsilon (p_i)$, we explore the $\epsilon$-neighborhood of $p_j$. If $p_j$ is a core point, then $p_j$ is added to the current cluster and the algorithm continues by appending its neighbors to the neighbor sets $\mathcal{X}_\epsilon (p_i)$ and $\mathcal{Y}_\epsilon (p_i)$. We repeat this process until all the points in the set $\mathcal{X}_\epsilon (p_i)$ are examined. Eventually, when the process is terminated, the points in the set $\mathcal{Y}_\epsilon (p_i)$ are included in our current cluster. 

\section{Experimental Results and Discussion}
\label{result}
In this section, to show performance of the proposed \textsf{DBSTexC} algorithm in Section \ref{DBAlgo}, we present our performance metric, illustrate experimental results, and analyze the overall average computational complexity.

\subsection{Performance Metric} \label{OpProb}
We choose the $\mathcal{F}_1$ score as a key component of our performance metric, since it is a popular measure in machine learning and statistical analysis for a test's accuracy and thus can be a useful tool to assess the clustering quality. The $\mathcal{F}_1$ score is expressed as
\begin{equation*}
\mathcal{F}_1 = 2\cdot \frac{\text{Precision}\cdot \text{Recall}}{\text{Precision} + \text{Recall}},
\end{equation*}   
which is the harmonic mean of $\text{Precision}$ and $\text{Recall}$. In our work, $\text{Precision}$ is the ratio of true positives (the number of POI-relevant tweets in clusters) to all predicted positives (the number of all geo-tagged tweets in clusters), that is, $\frac{\text{True Positives (TP)}}{\text{TP} + \text{False Positives (FP)}}$; and $\text{Recall}$ is the ratio of true positives to actual positives that should have been returned (the total number of POI-relevant tweets), that is, $\frac{\text{TP}}{\text{TP}+\text{False Negatives (FN)}}$. 

\begin{figure}[!t]
\captionsetup[subfigure]{font=scriptsize,labelfont=normalsize}
\subfloat[Hyde Park\label{sfig:testa}]{%
  \includegraphics[height=3cm,width=.49\linewidth]{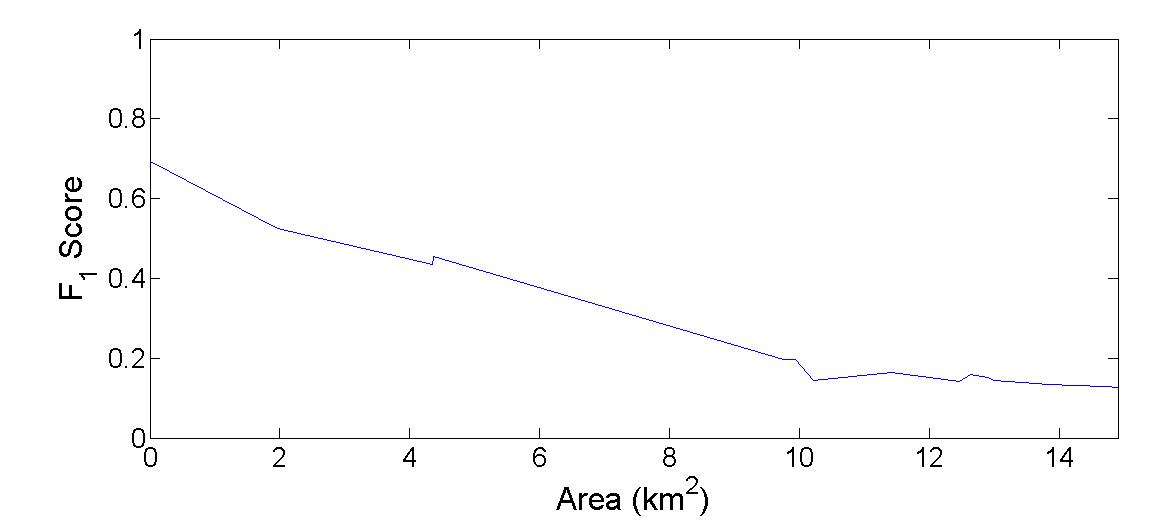}%
}\hfill
\subfloat[Regent's Park\label{sfig:testa}]{%
  \includegraphics[height=3cm,width=.49\linewidth]{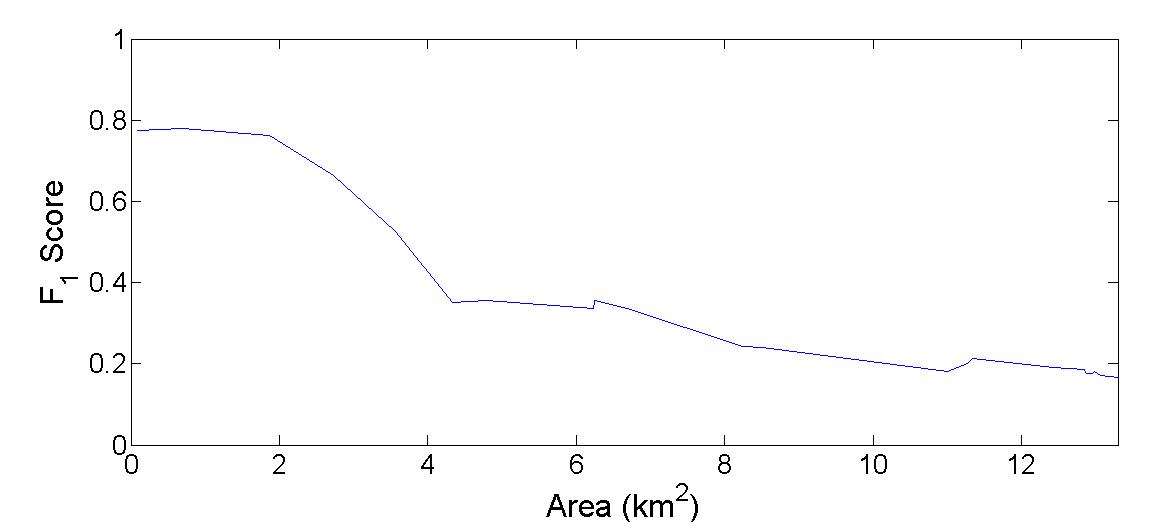}%
}\hfill
\subfloat[Edinburgh Castle\label{sfig:testa}]{%
  \includegraphics[height=3cm,width=.49\linewidth]{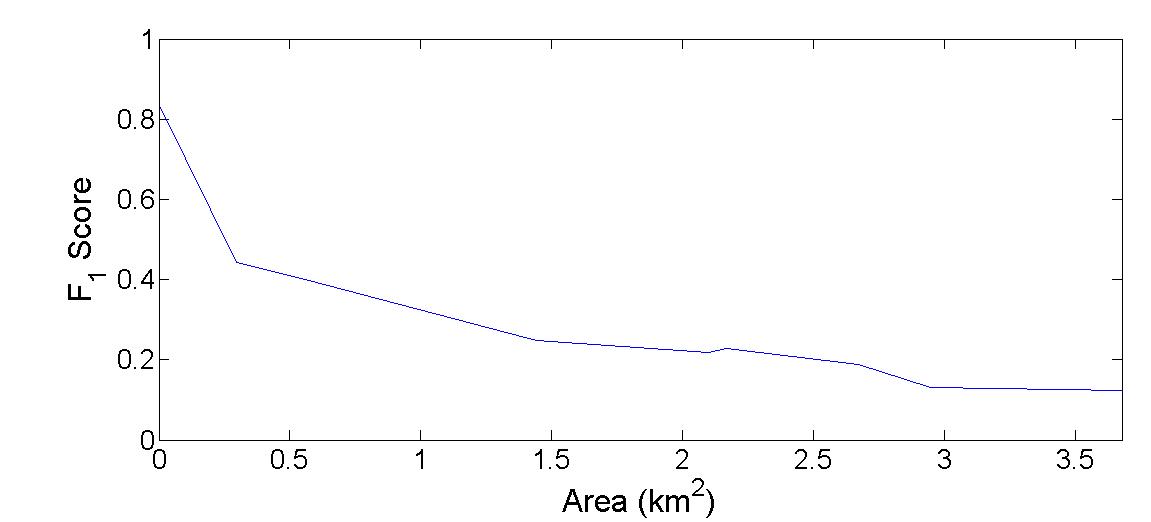}%
}\hfill
\subfloat[University of Oxford\label{sfig:testa}]{%
  \includegraphics[height=3cm,width=.49\linewidth]{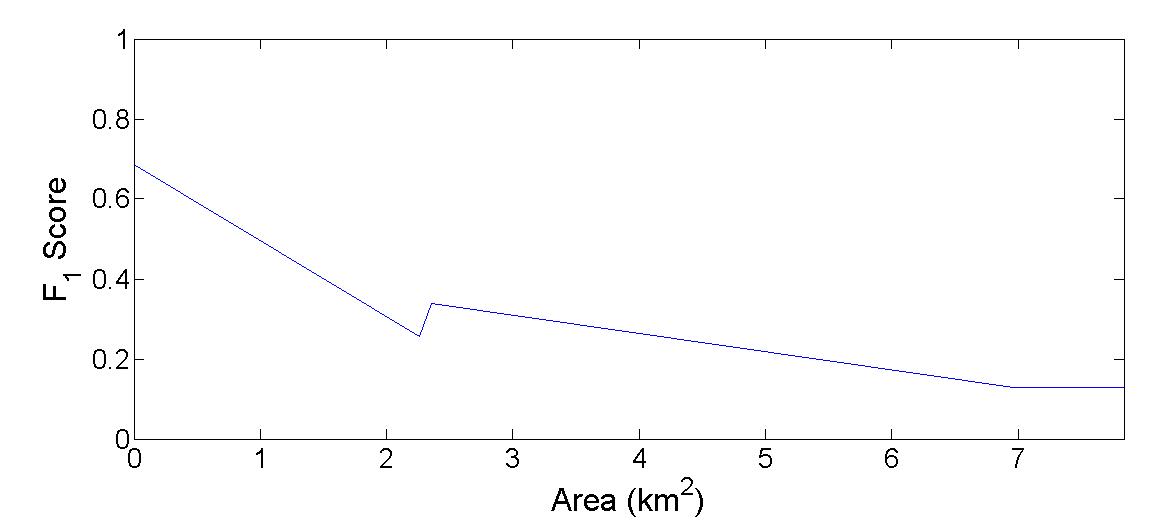}%
}
\caption{The $\mathcal{F}_1$ score according to the clusters' area}
\label{f1ToArea}
\end{figure}

In the process of discovering clusters from geo-tagged tweets relevant to a POI, the area covered by the clusters can be a matter of great interest, since several applications such as geo-marketing may desire a widespread geographic area. To illustrate this point, in Fig. \ref{f1ToArea}, we plot the $\mathcal{F}_1$ score according to the clusters' area (in km\textsuperscript{2}) for four chosen POIs. One can observe that the highest $\mathcal{F}_1$ score tends to be found when the clusters' area is very small. Therefore, although it is good to find clusters with the highest $\mathcal{F}_1$ score, it is more preferred to considerably extend the area of the resulting clusters at the expense of a slightly reduced value of $\mathcal{F}_1$ in some applications. To this end, we would like to formulate a following new performance metric expressed as the product of a power law in the clusters' area $A$ (in km\textsuperscript{2}) normalized to the area of the query region, denoted by $\bar{A} = \frac{\text{Area covered by the clusters}}{\text{Area of the query region}}$, and the $\mathcal{F}_1$ score:
\begin{align}
\label{pm}
\bar{A}^\alpha \mathcal{F}_1,
\end{align}
where $\alpha \geq 0$ is the area exponent, which balances between different levels of geographic coverage. When $\alpha$ is small, clusters with the almost highest $\mathcal{F}_1$ score are returned, and as a special case, when $\alpha=0$, our performance metric becomes the $\mathcal{F}_1$ score. On the other hand, as $\alpha$ increases, clusters covering a wide area are obtained at the cost of a reduced $\mathcal{F}_1$. Hence, given parameters for the two algorithms (i.e., ($\epsilon, N_{\text{min}}$) for DBSCAN and ($\epsilon, N_{\text{min}}, N_{\text{max}}$) for \textsf{DBSTexC}), we are able to calculate the performance metric in Equation (\ref{pm}) along with the corresponding $\mathcal{F}_1$ score and the normalized clusters' area $\bar{A}$ in each case.

\subsection{Experimental Evaluation}
We exhibit the experimental results for various values of $\alpha \geq 0$. In regard to the query region, for all chosen POIs, we assume that $\eta = 0.07$, which can also be set to other values to control the clustering quality constraint. We summarize and compare the performance of both \textsf{DBSTexC} and DBSCAN for four POIs in Table \ref{DBtex}, where $\alpha \in \{0, 0.5, 0.75,1\}$. From the table, it is evident that \textsf{DBSTexC} outperforms DBSCAN in terms of our performance metric in~(\ref{pm}) by up to 60.09\% for all four chosen POIs. The performance improvement is manifest especially for Hyde Park, which is one of the biggest and the most visited parks in London. In Figs. \ref{TexHP}--\ref{TexEd}, we show the clustering results of DBSCAN and \textsf{DBSTexC} for the four POIs when $\alpha = 0.5$. To emphasize the performance gap between the two algorithms, we illustrate the geographic cluster region with the distribution of POI-irrelevant tweets. From Fig. \ref{TexHP}, one can see that in the Hyde Park case, \textsf{DBSTexC} dramatically excludes a huge number of POI-irrelevant tweets from its clusters, while covering a much bigger geographic area in comparison with DBSCAN. This highlights the robustness of \textsf{DBSTexC} to discover high-quality clusters in terms of the proposed performance metric $\bar{A}^\alpha \mathcal{F}_1$.

\begin{table}[!t]
\captionsetup{font=normalsize}
\centering
\caption{Experimental results for DBSCAN and \textsf{DBSTexC}}

\label{DBtex}
\resizebox{\columnwidth}{!}{%
\begin{tabular}{|c|c|c|c|}
\hline
                     & \multicolumn{3}{c|}{$\bar{A}^\alpha \mathcal{F}_1$ $(\alpha = 0)$}                                                      \\ \hline
\textbf{POI name}             & DBSCAN ($X$) & \textsf{DBSTexC} ($Y$) & \begin{tabular}[c]{@{}c@{}}Improvement\\ Rate $\left(\frac{Y-X}{X} \%\right)$\end{tabular} \\ \hline
Hyde Park            & 0.7333  & 0.7391 & 0.79                                                    \\ \hline
Regent's Park        & 0.7795  & 0.7851 & 0.72                                                     \\ \hline
University of Oxford & 0.6930  & 0.6930 & 0 
\\ \hline
Edinburgh Castle     & 0.8364  & 0.8364 & 0                                                    \\ \hline
                     & \multicolumn{3}{c|}{$\bar{A}^\alpha \mathcal{F}_1$ $(\alpha = 0.5)$}                                                      \\ \hline
Hyde Park            & 0.2103  & 0.3058 & 45.41                                                    \\ \hline
Regent's Park        & 0.3184  & 0.3188 & 0.13                                                     \\ \hline
University of Oxford & 0.1288  & 0.2062 & 60.09                                                    \\ \hline
Edinburgh Castle     & 0.1333  & 0.1741 & 30.61                                                    \\ \hline
                     & \multicolumn{3}{c|}{$\bar{A}^\alpha \mathcal{F}_1$ $(\alpha = 0.75)$}                                                      \\ \hline
Hyde Park            & 0.1429  & 0.2284 & 59.83                                                    \\ \hline
Regent's Park        & 0.2216  & 0.2219 & 0.14                                                     \\ \hline
University of Oxford & 0.1288  & 0.1673 & 29.89
 \\ \hline
Edinburgh Castle     & 0.1231  & 0.1510 & 22.66
 \\ \hline
                     & \multicolumn{3}{c|}{$\bar{A}^\alpha \mathcal{F}_1$ $(\alpha = 1)$}                                                      \\ \hline
Hyde Park            & 0.1253  & 0.1816 & 44.93                                                    \\ \hline
Regent's Park        & 0.1303  & 0.1844 & 41.52
 \\ \hline
University of Oxford & 0.1288  & 0.1288 & 0
 \\ \hline
Edinburgh Castle     & 0.1231  & 0.1412 & 14.70
 \\ \hline
\end{tabular}}
\end{table}

\begin{figure}[!t]
\captionsetup[subfigure]{font=scriptsize,labelfont=normalsize}
\subfloat[DBSCAN\label{sfig:testa}]{%
  \includegraphics[height=3cm,width=.49\linewidth]{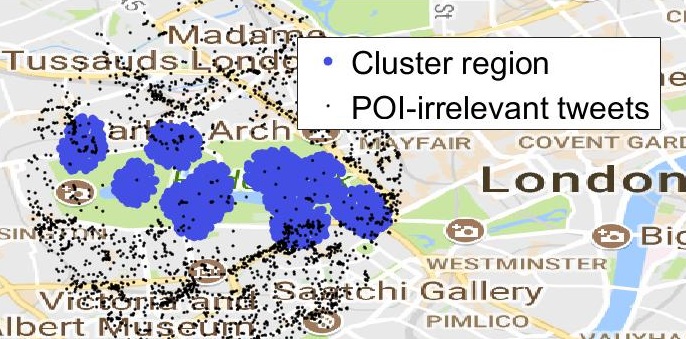}%
}\hfill
\subfloat[\textsf{DBSTexC}\label{sfig:testa}]{%
  \includegraphics[height=3cm,width=.49\linewidth]{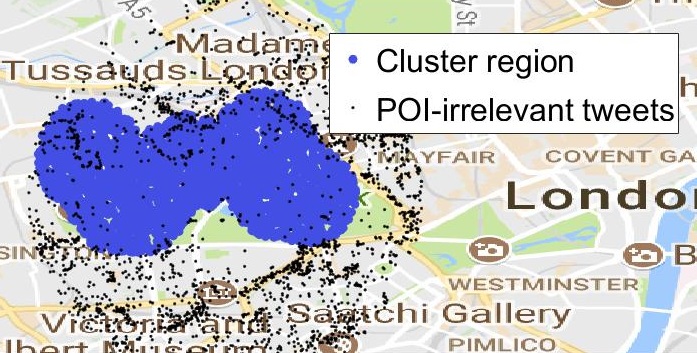}%
}
\caption{The results of DBSCAN and \textsf{DBSTexC} for Hyde Park when $\alpha =0.5$}
\label{TexHP}
\end{figure}

\begin{figure}[!t]
\captionsetup[subfigure]{font=scriptsize,labelfont=normalsize}
\subfloat[\textsf{DBSCAN}\label{sfig:testa}]{%
  \includegraphics[height=3cm,width=.49\linewidth]{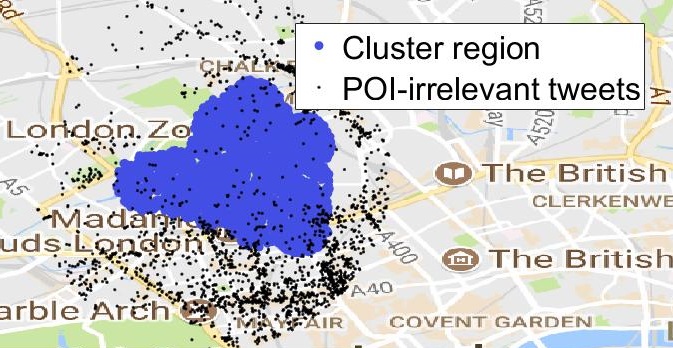}%
}\hfill
\subfloat[\textsf{DBSTexC}\label{sfig:testa}]{%
  \includegraphics[height=3cm,width=.49\linewidth]{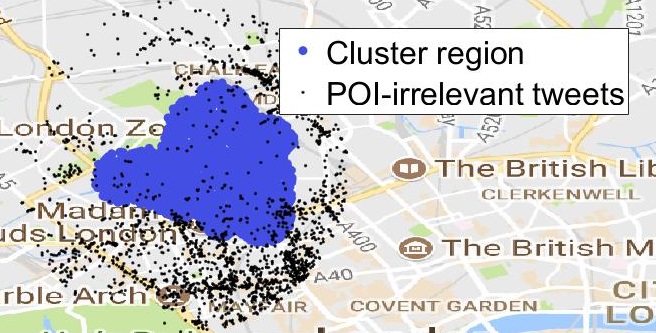}%
}
\caption{The results of DBSCAN and \textsf{DBSTexC} for Regent's Park when $\alpha =0.5$}
\label{TexRP}
\end{figure}

\begin{figure}[!t]
\captionsetup[subfigure]{font=scriptsize,labelfont=normalsize}
\subfloat[\textsf{DBSCAN}\label{sfig:testa}]{%
  \includegraphics[height=3cm,width=.49\linewidth]{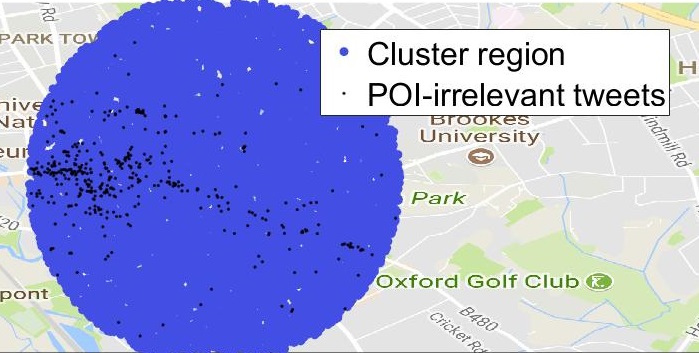}%
}\hfill
\subfloat[\textsf{DBSTexC}\label{sfig:testa}]{%
  \includegraphics[height=3cm,width=.49\linewidth]{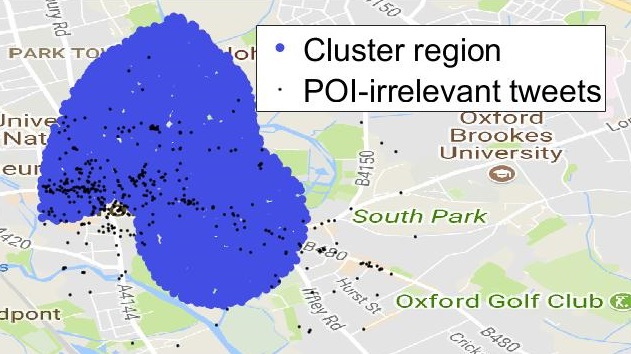}%
}
\caption{The results of DBSCAN and \textsf{DBSTexC} for University of Oxford when $\alpha =0.5$}
\label{TexOx}
\end{figure}

\begin{figure}[!t]
\captionsetup[subfigure]{font=scriptsize,labelfont=normalsize}
\subfloat[\textsf{DBSCAN}\label{sfig:testa}]{%
  \includegraphics[height=3cm,width=.49\linewidth]{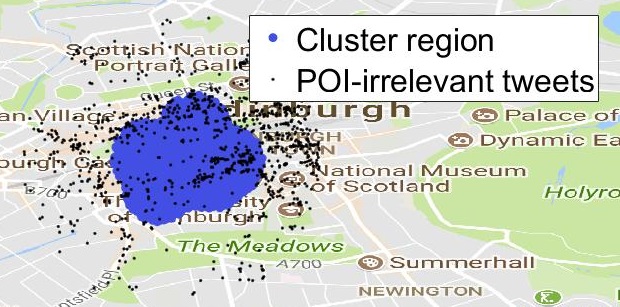}%
}\hfill
\subfloat[\textsf{DBSTexC}\label{sfig:testa}]{%
  \includegraphics[height=3cm,width=.49\linewidth]{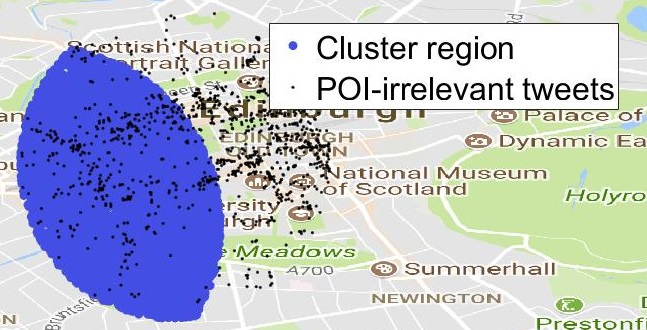}%
}
\caption{The results of DBSCAN and \textsf{DBSTexC} for Edinburgh Castle when $\alpha =0.5$}
\label{TexEd}
\end{figure}

On the other hand, for a special case where $\alpha = 0$, we notice from Table \ref{DBtex} that the \textsf{DBSTexC} algorithm has almost the same performance as that of DBSCAN. While both algorithms are able to find clusters with the high $\mathcal{F}_1$ score, it is revealed from Fig. \ref{TexAlpha0} that the clusters cover remarkably small geographic areas, which do not provide any insight or useful information about the regions where people are interested in the POIs. As a result, to obtain high-quality clusters covering large geographic areas, it is needed to incorporate the clusters' area into the performance metric. 

\begin{figure}[!t]
\captionsetup[subfigure]{font=scriptsize,labelfont=normalsize}
\subfloat[Hyde Park\label{sfig:testa}]{%
  \includegraphics[height=3cm,width=.49\linewidth]{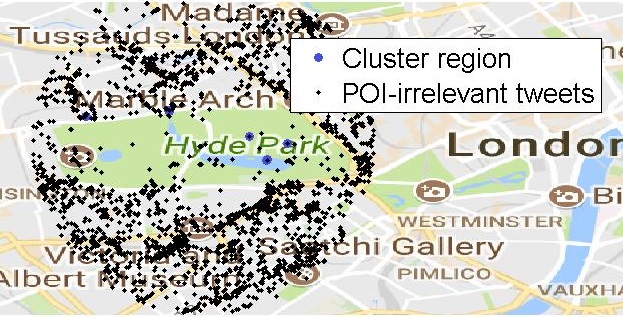}%
}\hfill
\subfloat[\textsf{Regent's Park}\label{sfig:testa}]{%
  \includegraphics[height=3cm,width=.49\linewidth]{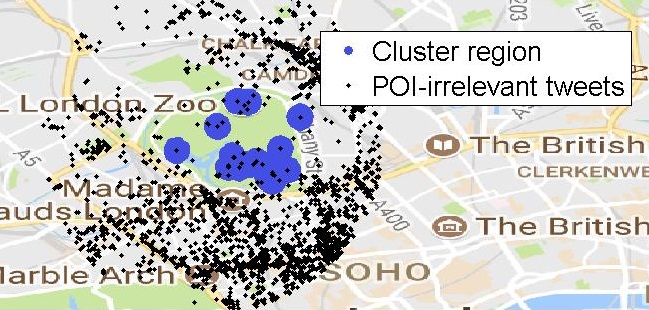}%
}\hfill
\subfloat[University of Oxford\label{sfig:testa}]{%
  \includegraphics[height=3cm,width=.49\linewidth]{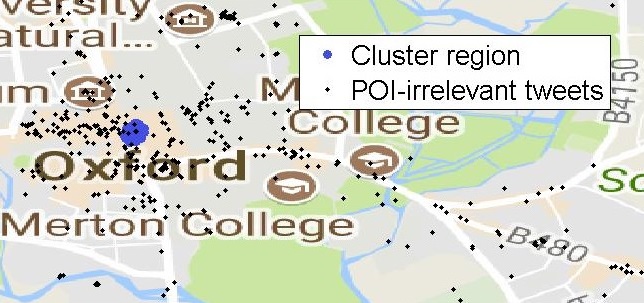}%
}\hfill
\subfloat[Edinburgh Castle\label{sfig:testa}]{%
  \includegraphics[height=3cm,width=.49\linewidth]{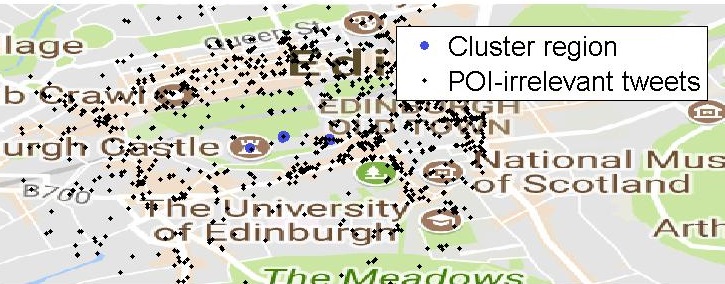}%
}
\caption{The results of \textsf{DBSTexC} when $\alpha =0$}
\label{TexAlpha0}
\end{figure}

\subsection{Computational Complexity}
We hereby analyze the computational complexity of the DBSCAN and \textsf{DBSTexC} algorithms. The runtime complexity of both algorithms is calculated by the input size (the number of tweets) times the basic operation $\epsilon$-neighborhood query (range query), which indeed dominates the complexity. 

In the case of \textsf{DBSTexC}, from Algorithms \ref{alg:Tex1} and \ref{alg:Tex2}, we can clearly see that the RangeQuery() function is invoked only for POI-relevant tweets that have not yet been visited, and the \textsf{DBSTexC} algorithm will visit every POI-relevant tweet in the dataset once. Therefore, we execute exactly one range query for every POI-relevant tweet in the dataset. For analysis, let $Q$ denote the complexity of the function range query, and $n$ and $m$ denote the number of POI-relevant and irrelevant tweets, respectively. It then follows that the complexity is expressed as  $\mathcal{O}(n \cdot Q)$. Based on how the function RangeQuery() is implemented, its complexity analysis can be divided into the following two cases:
\begin{itemize}
\item If the range query is implemented using a {\em linear scan}, then we have $Q = \mathcal{O}( (n+m) \cdot D)$, where $D$ indicates the cost of computing the distance between two points. Because each geo-tagged tweet in our dataset has a two-dimensional coordinate and is represented by a 64-bit data type in the database, the cost $D$ can be treated as a constant, independent of $n$ and $m$. Hence, the complexity of the range query and \textsf{DBSTexC} are $\mathcal{O}(n+m)$ and $\mathcal{O}(n^2+nm)$, respectively.
\item If the range query is implemented using a {\em spatial index}, then we can calculate the worst-case runtime complexity by analyzing both the cost of building the index and the worst-case complexity of the function RangeQuery() used along with the spatial index. For example, for a two-dimensional tree, the worst-case complexity of RangeQuery() is $\mathcal{O}(n+m)$, and the cost of building a two-dimensional tree from $n+m$ geo-tagged points is  
\begin{flalign*}
&\mathcal{O}((n+m)\cdot \log(n+m)) \\ & = \mathcal{O}((n+m)\cdot [\log n + \log (1+ \frac{m}{n})]) 
\\&= \mathcal{O}((n+m)\cdot \log n),
\end{flalign*}
where the last equality holds under the assumption that $m=n^\beta$ for $\beta\ge1$. Therefore, it follows that the time complexity of \textsf{DBSTexC} is $\mathcal{O}(n\cdot (n+m) + (n+m) \cdot \log n) = \mathcal{O}(n^2 + nm)$.
\end{itemize}

For the DBSCAN algorithm, it has recently been proved in \cite{schubert2017dbscan} that the worst-case complexity is  $\mathcal{O}(n \cdot Q)$. Based on the arguments above, when the range query is implemented using a linear scan, the complexity is $\mathcal{O}( n^2 \cdot D) = \mathcal{O} (n^2)$. On the contrary, if the range query is accelerated using a spatial index such as a two-dimensional tree, the worst-case runtime complexity of DBSCAN is  $\mathcal{O}(n^2)$ since it takes $O(n\log n)$ to build the tree from $n$ geo-tagged points and the range query has the worst-case complexity of $O(n)$.

To summarize the aforementioned analysis, the worst-case time complexity of \textsf{DBSTexC} and DBSCAN is $\mathcal{O}(n^2 + nm)$ and $\mathcal{O}(n^2)$, respectively. If we focus on a region where $m = c \cdot n$ for a constant $c>0$, then the complexity of \textsf{DBSTexC} is  $\mathcal{O}(n^2)$. In the other region where $m = n^\beta$ for $\beta > 1$, the the complexity of \textsf{DBSTexC} is $\mathcal{O}(n^{1+\beta})$. 

\begin{figure}[!t]
\captionsetup[subfigure]{font=scriptsize,labelfont=normalsize}
\subfloat[Hyde Park\label{sfig:testa}]{%
  \includegraphics[height=3cm,width=.49\linewidth]{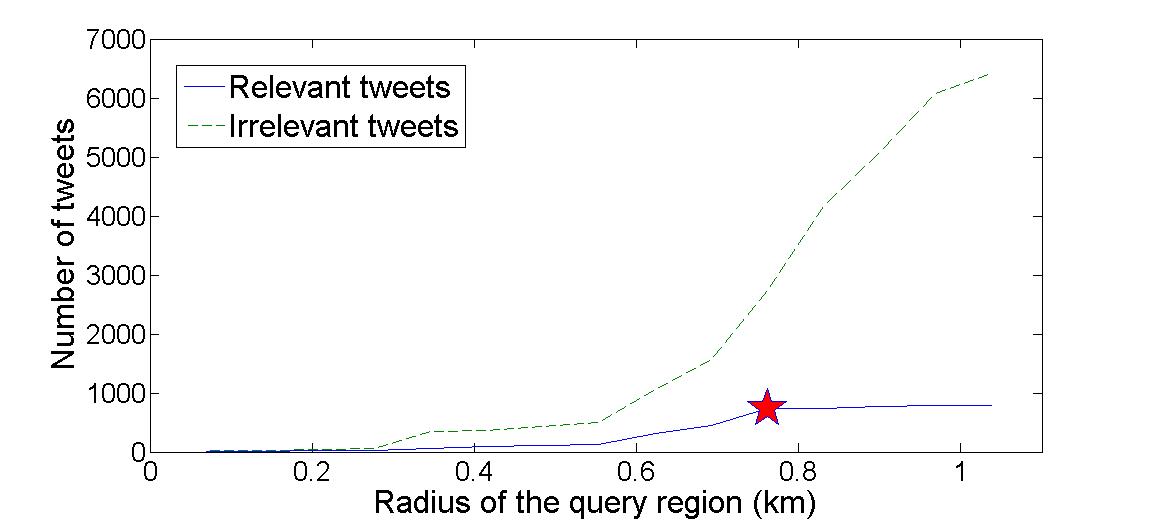}%
}\hfill
\subfloat[Regent's Park\label{sfig:testa}]{%
  \includegraphics[height=3cm,width=.49\linewidth]{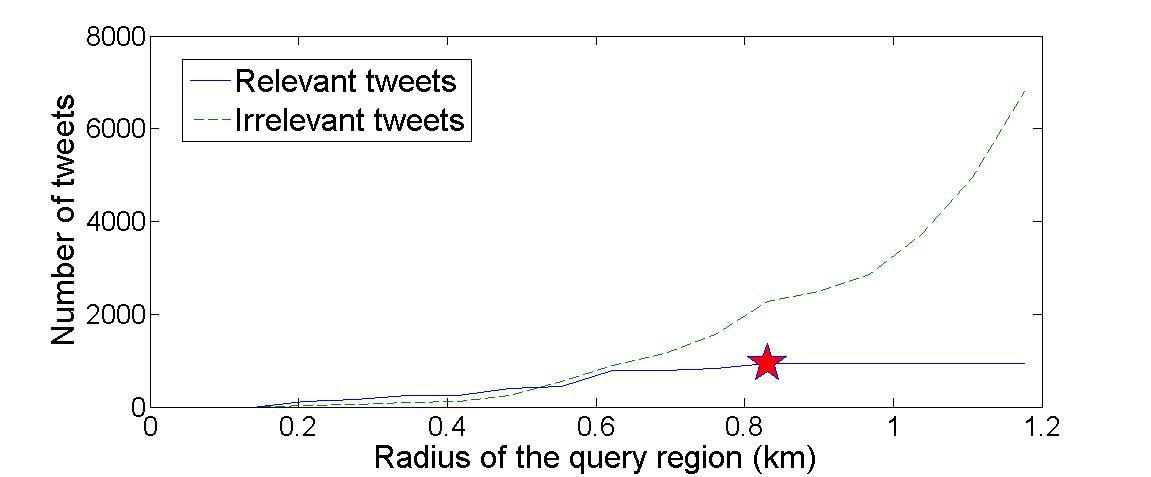}%
}\hfill
\subfloat[Edinburgh Castle\label{sfig:testa}]{%
  \includegraphics[height=3cm,width=.49\linewidth]{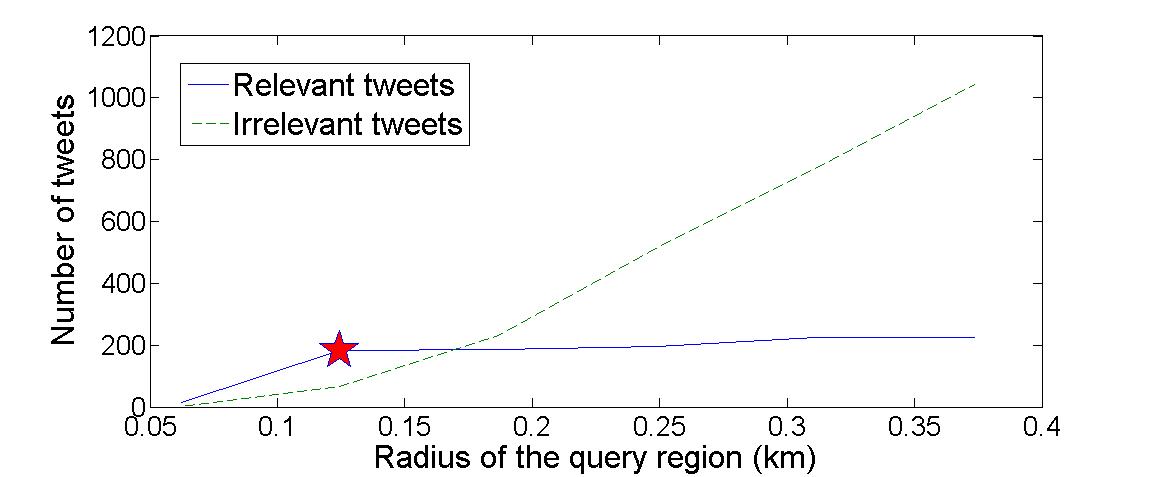}%
}\hfill
\subfloat[University of Oxford\label{sfig:testa}]{%
  \includegraphics[height=3cm,width=.49\linewidth]{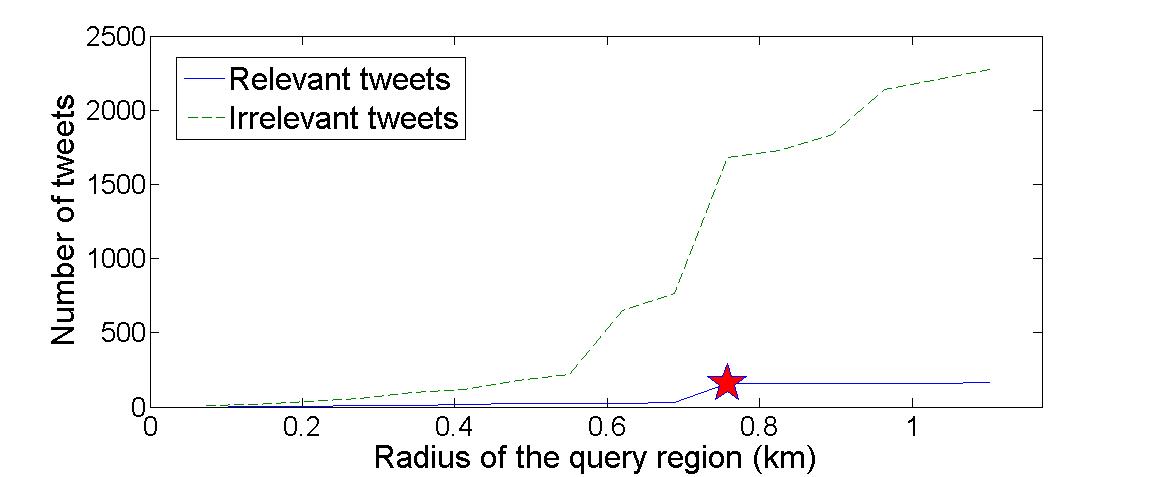}%
}
\caption{The number of tweets according to the radius of the query region}
\label{NoTweetToRadius}
\end{figure}

To numerically validate our complexity analysis, we first plot the number of tweets according to different radii of the query region. From Fig. \ref{NoTweetToRadius}, we observe a common trend that the numbers of POI-relevant and POI-irrelevant tweets, denoted by $n$ and $m$, respectively, increase with the increasing radius of the query region. However, their rates of growth are different; up to a certain radius of the query region, the numbers of POI-relevant and the POI-irrelevant tweets grow at a similar rate, but beyond such a radius (depicted in the figure with a star), the number of POI-irrelevant tweets grows faster than the number of POI-relevant tweets. This observation is basically consistent with our prior assumption: there is a region where the number of POI-irrelevant tweets is a constant times the number of POI-relevant tweets, having the complexity of $\mathcal{O}(n^2)$ for \textsf{DBSTexC}; and there is another region where the rate of growth of the number of POI-irrelevant tweets is higher than that of the POI-relevant tweets, having the complexity of $\mathcal{O}(n^{1+\beta})$ for $\beta > 1$ for \textsf{DBSTexC}. 

\begin{figure}[!t]
\captionsetup[subfigure]{font=scriptsize,labelfont=normalsize}
\subfloat[Hyde Park\label{sfig:testa}]{%
  \includegraphics[height=3cm,width=.49\linewidth]{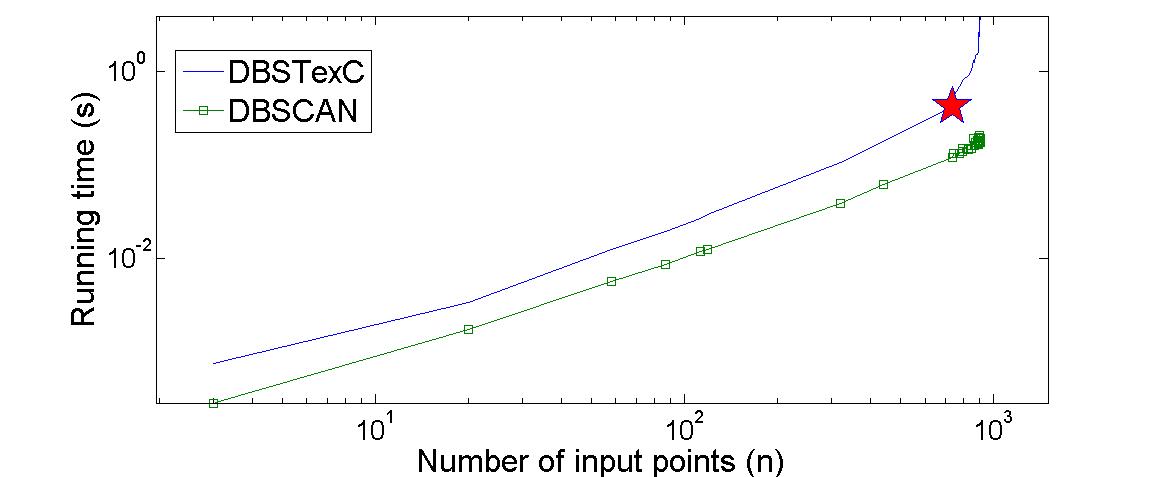}%
}\hfill
\subfloat[Regent's Park\label{sfig:testa}]{%
  \includegraphics[height=3cm,width=.49\linewidth]{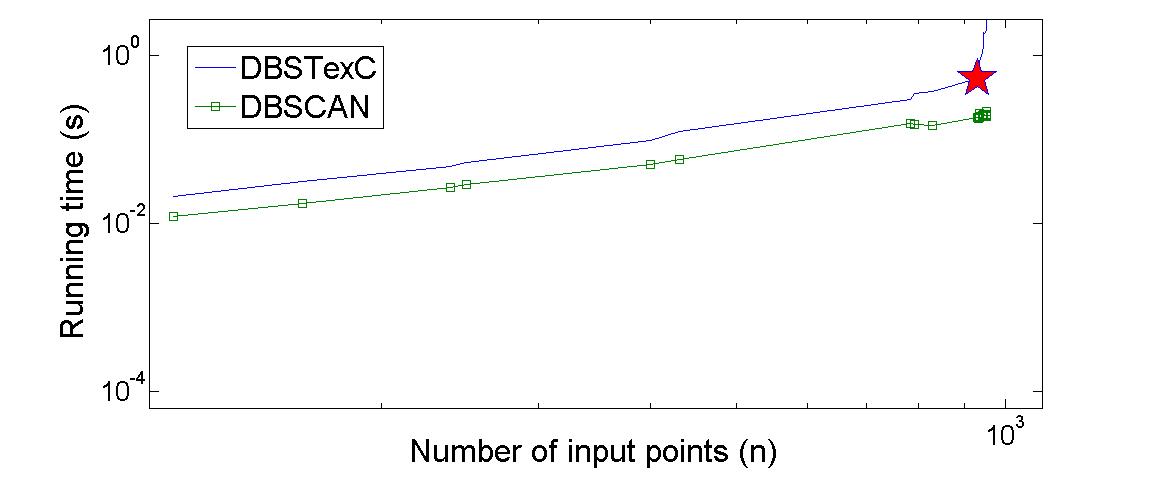}%
}\hfill
\subfloat[Edinburgh Castle\label{sfig:testa}]{%
  \includegraphics[height=3cm,width=.49\linewidth]{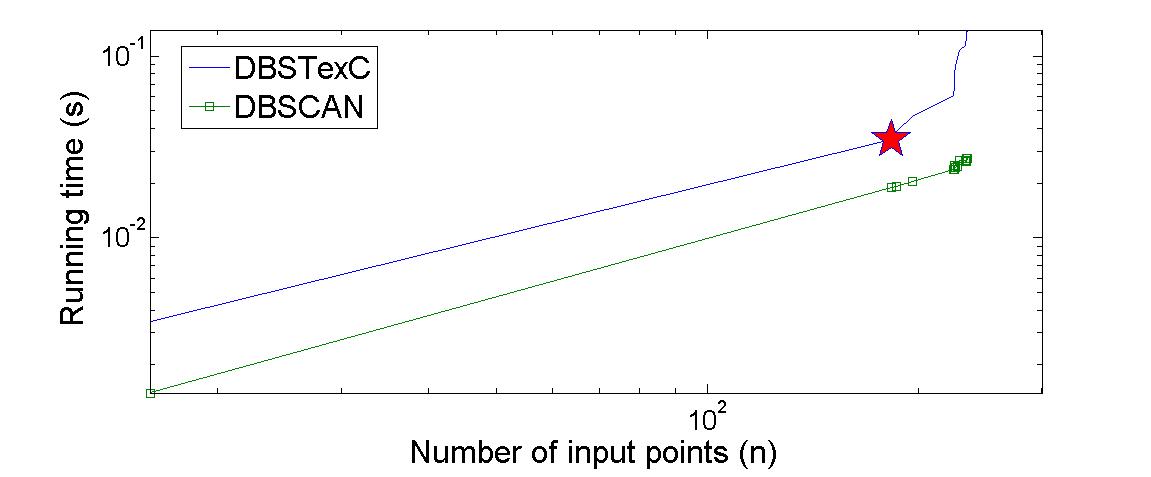}%
}\hfill
\subfloat[University of Oxford\label{sfig:testa}]{%
  \includegraphics[height=3cm,width=.49\linewidth]{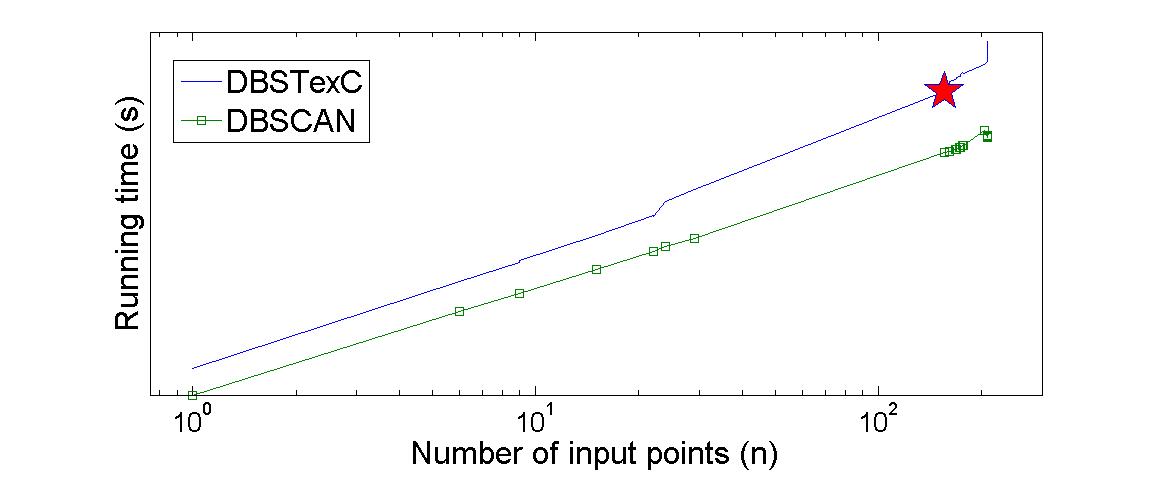}%
}
\caption{The runtime complexity of \textsf{DBSTexC} and DBSCAN}
\label{TexCANRuntime}
\end{figure}

We further validate our complexity analysis by plotting the actual runtime complexity of the \textsf{DBSTexC} and DBSCAN algorithm for the worst case. It is easily seen that the worst case takes place when the parameters of \textsf{DBSTexC} and DBSCAN are set to extreme values corresponding to ($\epsilon, N_{\text{min}}$) = (radius of the query region, 1) for DBSCAN and ($\epsilon, N_{\text{min}}, N_{\text{max}}$) = (radius of the query region, 1, total number of POI-irrelevant tweets) for \textsf{DBSTexC}. Under this parameter setting, Fig. \ref{TexCANRuntime} numerically shows the runtime complexity of the \textsf{DBSTexC} and DBSCAN algorithms in log-log scale according to four different POIs. From Fig. \ref{TexCANRuntime}, we clearly see that up to a certain value of the number of geo-tagged tweets, the \textsf{DBSTexC} and DBSCAN have a similar rate of growth maintaining a constant gap between each other. Beyond the point (depicted in the figure with a star), the time complexity of \textsf{DBSTexC} is higher than that of DBSCAN. Compared with Fig. \ref{NoTweetToRadius}, these transitional points exactly match the ones dividing our query region into two sub-regions corresponding to $m = c \cdot n$ for a constant $c$ and $m = n^\beta$ for $\beta > 1$. Therefore, from Figs. \ref{NoTweetToRadius} and \ref{TexCANRuntime}, it is possible to adequately substantiate our analysis on the complexity of the \textsf{DBSTexC} and DBSCAN algorithms.

\section{Fuzzy DBSTexC (\textsf{F-DBSTexC})}
\label{fuzzy}
Thus far, the \textsf{DBSTexC} algorithm has been designed by finding clusters with strict boundaries. For further analysis, we study the geographic distribution of tweets (i.e., two-dimensional coordinates) by using the sorted $k$-th-nearest neighbor ($k$-NN) distance plot, which shows the distance from geo-tagged points to their $k$-th-nearest neighbors sorted in ascending order. If there exists a sudden and sharp increase in the distances between geo-tagged points, then it indicates that clusters and noise points are clearly separated. On the other hand, if we observe a smooth increase in the distances between tweets, then it may not be clear which tweets should be grouped as clusters and which tweets should be treated as noise. In other words, decision boundaries for clusters would be fuzzy. In Fig. \ref{fig:fig4}, the $k$-NN distance plot for the four POIs is shown when $k=4$. From the figure, we observe that the geographic distribution of tweets is generally smooth. For this reason, using crisp boundaries to separate clusters may not exploit the entire geographic features of the data. To overcome this problem, we hereby propose an extension of \textsf{DBSTexC}, called Fuzzy \textsf{DBSTexC} (\textsf{F-DBSTexC}), which incorporates the notion of fuzzy clustering into \textsf{DBSTexC} with a view to fully capturing the smoothly distributed geographic characteristics of tweets. 

\begin{figure}[!t]
\centering
\captionsetup[subfigure]{font=scriptsize,labelfont=normalsize}
\subfloat[Hyde Park\label{sfig:testa}]{%
  \includegraphics[height=4cm,width=.49\linewidth]{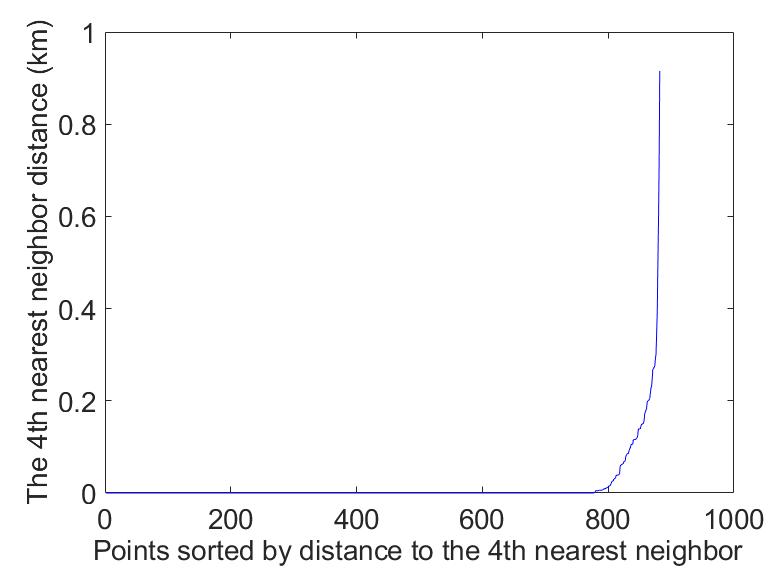}%
}\hfill
\subfloat[Regent's Park\label{sfig:testa}]{%
  \includegraphics[height=4cm,width=.49\linewidth]{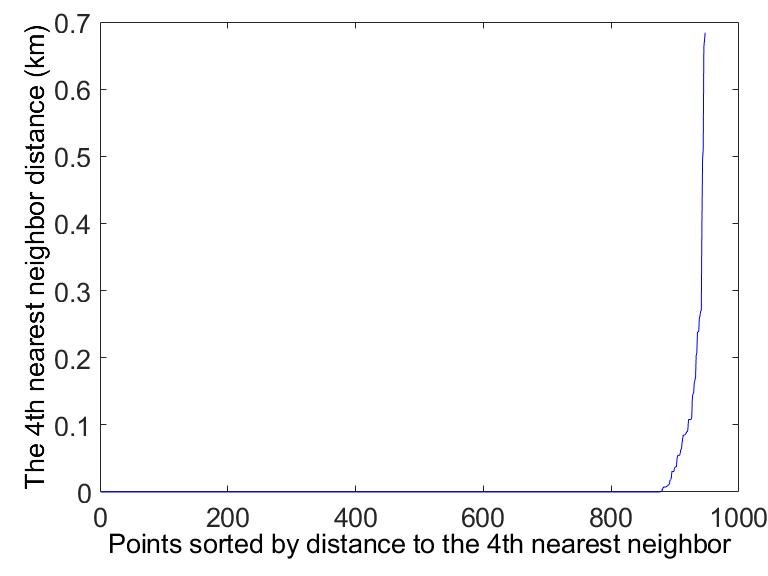}%
}\hfill
\subfloat[Edinburgh Castle\label{sfig:testa}]{%
  \includegraphics[height=4cm,width=.49\linewidth]{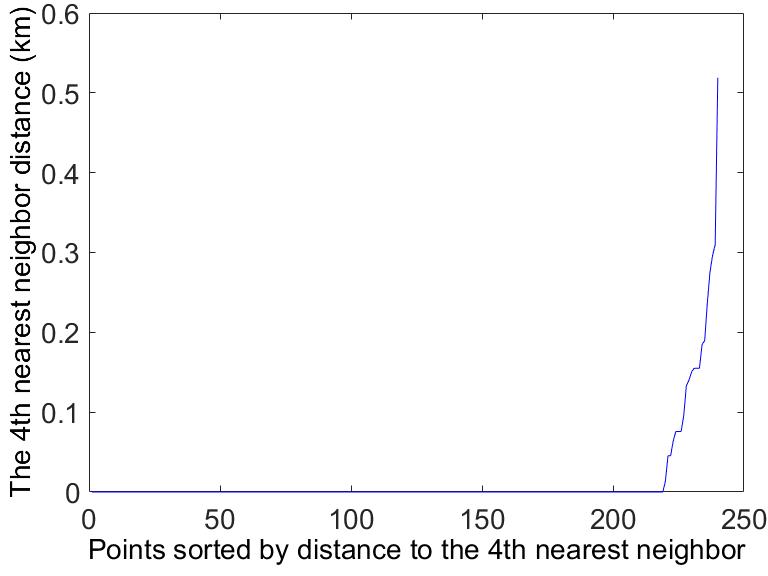}%
}\hfill
\subfloat[University of Oxford\label{sfig:testa}]{%
  \includegraphics[height=4cm,width=.49\linewidth]{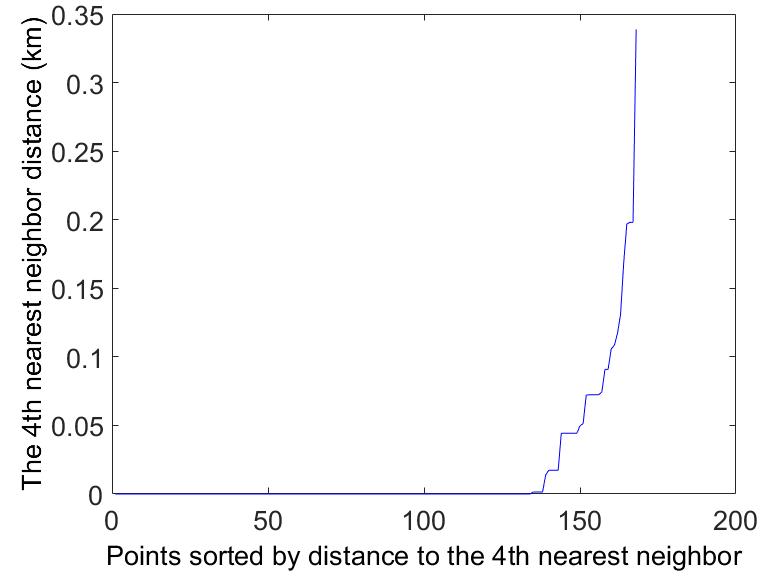}%
}
\caption{The $k$-NN distance plot for different POIs when $k=4$}
\label{fig:fig4}
\end{figure}

\subsection{\textsf{F-DBSTexC} Algorithm}
To design a new algorithm with the notion of fuzzy clustering, we relax the constraints on a point's neighborhood density. That is, we replace the parameters $N_{\text{min}}$ and $N_{\text{max}}$ by two new sets of parameters ($N_{\text{min}_1}$, $N_{\text{min}_2}$) and  ($N_{\text{max}_1}$, $N_{\text{max}_2}$), respectively, which specify the soft constraints on a point's neighborhood density. For example, in an $\epsilon$-neighborhood of a POI-relevant tweet, if the number of POI-relevant tweets is larger than $N_{\text{min}_1}$ and the number of POI-irrelevant tweets is smaller than $N_{\text{max}_2}$, then a fuzzy neighborhood is generated. To determine the neighborhood cardinality, we introduce monotonically non-decreasing membership functions $J_{Re}(p)$ and $J_{Irre}(p)$ for the POI-relevant tweets and POI-irrelevant tweets, respectively, as follows~\cite{ienco2016fuzzy}:\footnote{Other types of membership functions~\cite{ienco2016fuzzy} can also be applicable.}

\begin{align}
\label{jre}
J_{Re}(p) = 
\begin{cases}
1 &\mbox{if } |X_\epsilon (p)| \ge N_{\text{min}_2} \\
\frac{|X_\epsilon (p)| - N_{\text{min}_1}}{N_{\text{min}_2} - N_{\text{min}_1}} &\mbox{if } N_{\text{min}_1} \le |X_\epsilon (p)| \le N_{\text{min}_2} \\
0 &\mbox{if } |X_\epsilon (p)| \le N_{\text{min}_1}, \\
\end{cases}
\end{align}

\begin{align}
\label{jirre}
J_{Irre}(p) = 
\begin{cases}
1 &\mbox{if } |Y_\epsilon (p)| \le N_{\text{max}_1} \\
\frac{N_{\text{max}_2} - |Y_\epsilon (p)|}{N_{\text{max}_2} - N_{\text{max}_1}} &\mbox{if } N_{\text{max}_1} \le |Y_\epsilon (p)| \le N_{\text{max}_2} \\
0 &\mbox{if } |Y_\epsilon (p)| \ge N_{\text{max}_2}, \\
\end{cases}
\end{align}
where $|X_\epsilon (p)|$ and $|Y_\epsilon (p)|$ denote the number of  POI-relevant and POI-irrelevant tweets, respectively, in a neighborhood of point $p$. The final cardinality of the $\epsilon$-neighborhood of a point $p$ is then given by 
\begin{align}
\label{mup}
\mu_p = \frac{1}{2}[J_{Re}(p) + J_{Irre}(p)].
\end{align}

Based on this notation, the definition of a core point in Definition 3 is revised as below.

\textit{Definition 9 (Core point):} A point $p\in \mathcal{X}$ is a core point if it fulfills the following condition:
\begin{equation*}
|\mathcal{X}_{\epsilon}(p)| \ge N_{\text{min}_1} \; \text{and} \; |\mathcal{Y}_{\epsilon}(p)| \le N_{\text{max}_2}.
\end{equation*}

Next, the \textsf{F-DBSTexC} algorithm is specified in Algorithms~\ref{alg:FTex1} and~\ref{alg:FTex2}. Compared to the original \textsf{DBSTexC}, modified parts correspond to line~\ref{change1} of Algorithm~\ref{alg:FTex1} and line 6 of Algorithm~\ref{alg:FTex2}, which serve to relax the constraints on a point's neighborhood density. The \textsf{F-DBSTexC} algorithm adds points to the clusters with their distinct fuzzy score $\mu_{p}$, as expressed in line 9 of Algorithm~\ref{alg:FTex2}.

\begin{algorithm}[!t]
\small
\caption{\textsf{F-DBSTexC}($\mathcal{X}$,$\mathcal{Y}$, $\epsilon$, $N_{\text{min}_1}$, $N_{\text{min}_2}$, $N_{\text{max}_1}$, $N_{\text{max}_2}$)}
\label{alg:FTex1}
\begin{algorithmic}
\Require $\mathcal{X}$,$\mathcal{Y}$, $\epsilon$, $N_{\text{min}_1}$, $N_{\text{min}_2}$, $N_{\text{max}_1}$, $N_{\text{max}_2}$
\Ensure Clusters with different labels $C$
\algrenewcommand\algorithmicensure{\textbf{Initialization:}}
\Ensure $C\gets 0$; $n\gets |\mathcal{X}|$; $m\gets |\mathcal{Y}|$; $p_i$ is a point in the set $\mathcal{X}$
\For{each $p_i$}
\If{$p_i$ is not visited}
\State Mark $p_i$ as visited
\State $[\mathcal{X}_\epsilon (p_i), \mathcal{Y}_\epsilon (p_i)] =$ RangeQuery$(p_i)$
\If{$|\mathcal{X}_\epsilon (p_i)|\ge N_{\text{min}_1} \  \& \  |\mathcal{Y}_\epsilon (p_i)|\le N_{\text{max}_2}$}  \label{change1}
\State $C\gets C+1$
\State ExpandCluster$(p_i,\mathcal{X}_\epsilon (p_i),\mathcal{Y}_\epsilon (p_i))$
\EndIf
\EndIf
\EndFor 
\end{algorithmic}
\end{algorithm}

\begin{algorithm}[!t]
\small
\begin{algorithmic}
\caption{ExpandCluster$(p_i, \mathcal{X}_\epsilon (p_i), \mathcal{Y}_\epsilon (p_i))$}
\label{alg:FTex2}
\Require $p_i, \mathcal{X}_\epsilon (p_i), \mathcal{Y}_\epsilon (p_i)$
\Ensure Cluster $C$ with all of its members
\algrenewcommand\algorithmicensure{\textbf{Initialization:}}
\State Add $p_i$ to the current cluster with fuzzy score $\mu_{p_i}$
\For{each point $p_j$ in the set $\mathcal{X}_\epsilon (p_i)$}
\If {$p_j$ is not visited}
\State Mark $p_j$ as visited
\State $[\mathcal{X}_\epsilon (p_j), \mathcal{Y}_\epsilon (p_j)] =$ RangeQuery$(p_j)$
\If{$|\mathcal{X}_\epsilon (p_j)|\ge N_{\text{min}_1}\  \& \  |\mathcal{Y}_\epsilon (p_j)|\le N_{\text{max}_2}$}  
\label{FTex:change2}
\State $\mathcal{X}_\epsilon (p_i) = \mathcal{X}_\epsilon (p_i) \cup \mathcal{X}_\epsilon (p_j)$
\State $\mathcal{Y}_\epsilon (p_i) = \mathcal{Y}_\epsilon (p_i) \cup \mathcal{Y}_\epsilon (p_j)$.  
\State Add $p_j$ to the current cluster with fuzzy score $\mu_{p_j}$   \label{FTex:change3}
\EndIf
\EndIf
\If{$p_j$ does not have a label}
\State Add $p_j$ to the current cluster
\EndIf
\EndFor
\If{$|\mathcal{Y}_\epsilon (p_i)|\ne 0$}
\For{each point $q_j$ in the set $\mathcal{Y}_\epsilon (p_i)$}
\If{$q_j$ is not visited}
\State Mark $q_j$ as visited
\If{$q_j$ does not have a label}
\State Add $q_j$ to the current cluster
\EndIf
\EndIf
\EndFor
\EndIf 
\end{algorithmic}
\end{algorithm}

\subsection{Experimental Evaluation}

\begin{table}[!t]
\captionsetup{font=normalsize}
\centering
\caption{Experimental results for \textsf{DBSTexC} and \textsf{F-DBSTexC}}

\label{FDBTexC}
\resizebox{\columnwidth}{!}{%
\begin{tabular}{|c|c|c|c|}
\hline
                     & \multicolumn{3}{c|}{$\bar{A}^\alpha \mathcal{F}_1$ $(\alpha = 0)$}                                                      \\ \hline
\textbf{POI name}             & \textsf{DBSTexC} ($X$)& \textsf{F-DBSTexC} ($Y$) & \begin{tabular}[c]{@{}c@{}}Improvement\\ Rate $\left(\frac{Y-X}{X} \%\right)$\end{tabular} \\ \hline
Hyde Park            & 0.7391  & 0.7556 & 2.23
 \\ \hline
Regent's Park        & 0.7851  & 0.7949 & 1.25
 \\ \hline
University of Oxford & 0.6930  & 0.7186 & 3.69                                                    \\ \hline
Edinburgh Castle     & 0.8364  & 0.8503 & 1.66                                                    \\ \hline
                     & \multicolumn{3}{c|}{$\bar{A}^\alpha \mathcal{F}_1$ $(\alpha = 0.5)$}                                                      \\ \hline
Hyde Park            & 0.3058  & 0.3063 & 0.16                                                    \\ \hline
Regent's Park        & 0.3188  & 0.3325 & 4.30                                                     \\ \hline
University of Oxford & 0.2062  & 0.2403 & 16.54                                                    \\ \hline
Edinburgh Castle     & 0.1741  & 0.1874 & 7.64                                                    \\ \hline
                     & \multicolumn{3}{c|}{$\bar{A}^\alpha \mathcal{F}_1$ $(\alpha = 0.75)$}                                                      \\ \hline
Hyde Park            & 0.2284  & 0.2302 & 0.79                                                    \\ \hline
Regent's Park        & 0.2219  & 0.2228 & 0.41                                                     \\ \hline
University of Oxford & 0.1673  & 0.1808 & 8.07                                                     \\ \hline
Edinburgh Castle     & 0.1510  & 0.1662 & 10.01                                                     \\ \hline
                     & \multicolumn{3}{c|}{$\bar{A}^\alpha \mathcal{F}_1$ $(\alpha = 1)$}                                                      \\ \hline
Hyde Park            & 0.1816  & 0.1896 & 4.41                                                    \\ \hline
Regent's Park        & 0.1844  & 0.1848 & 0.22
 \\ \hline
University of Oxford & 0.1288  & 0.1640 & 27.33
 \\ \hline
Edinburgh Castle     & 0.1412  & 0.1412 & 0                                                     \\ \hline
\end{tabular}}
\end{table}

We summarize the experimental results in Table \ref{FDBTexC} according to different values of $\alpha \geq 0$. From the table, one can make the following insightful observations:
\begin{itemize}
\item The clustering quality of \textsf{F-DBSTexC} is greater than or at least equal to that of \textsf{DBSTexC} for all chosen POIs, showing the performance gain over \textsf{DBSTexC} by up to 27.33\%.
\item Although \textsf{F-DBSTexC} has slightly better performance than that of \textsf{DBSTexC} for the two POIs located in London (i.e., Hyde Park and Regent's Park), it remarkably outperforms \textsf{DBSTexC} for POIs in smaller cities such as University of Oxford and Edinburgh Castle.
\end{itemize}
The first observation can be easily understood because \textsf{F-DBSTexC} is a fuzzy extension of \textsf{DBSTexC}; therefore its performance is guaranteed to be at least as good as that of \textsf{DBSTexC}. On the other hand, the second observation may not be straightforward. We scrutinize the geographic distribution of tweets in various locations and notice that in general, POIs in crowded cities like London are surrounded by a significant number of POI-irrelevant tweets. As a result, further extension of the clusters' area would not be beneficial. However, for POIs in smaller cities such as Oxford and Edinburgh, the geographic distribution of POI-irrelevant tweets around a POI tends to be much more sparse, enabling fuzzy extension of \textsf{DBSTexC} to work effectively. To verify our observation, we conduct additional experiments for four different POIs both in populous metropolitan areas and smaller cities. The experimental results are summarized in Table \ref{FDBTexC_cont}. Among the four newly chosen POIs, Buckingham Palace and Greenwich Park are located in London; Cambridge University and Glasgow University are in the city of Cambridge and Glasgow, respectively. One can see that for POIs in London, \textsf{F-DBSTexC} shows a slightly better clustering quality than that of \textsf{DBSTexC}. However, for POIs in Cambridge and Glasgow, two smaller cities, \textsf{F-DBSTexC} is much superior to \textsf{DBSTexC}. This remark highlights our proposition that \textsf{F-DBSTexC} is a dynamic extension of \textsf{DBSTexC}, allowing \textsf{DBSTexC} to apply in different situations with diverse types  of POIs.  

\begin{table}[!t]
\captionsetup{font=normalsize}
\centering
\caption{Additional experimental results for \textsf{DBSTexC} and \textsf{F-DBSTexC}}

\label{FDBTexC_cont}
\resizebox{\columnwidth}{!}{%
\begin{tabular}{|c|c|c|c|}
\hline
                     & \multicolumn{3}{c|}{$\bar{A}^\alpha \mathcal{F}_1$ $(\alpha = 0)$}                                                      \\ \hline
\textbf{POI name}             & \textsf{DBSTexC} ($X$) & \textsf{F-DBSTexC} ($Y$) & \begin{tabular}[c]{@{}c@{}}Improvement\\ Rate  $\left(\frac{Y-X}{X} \%\right)$\end{tabular} \\ \hline
Buckingham Palace    & 0.7594  & 0.7611 & 0.22                                                    \\ \hline
Greenwich Park       & 0.7445  & 0.7588 & 1.92                                                     \\ \hline
Cambridge University & 0.6495  & 0.6495 & 0                                                    \\ \hline
Glasgow University   & 0.6839  & 0.7000 & 2.35                                                    \\ \hline
                     & \multicolumn{3}{c|}{$\bar{A}^\alpha \mathcal{F}_1$ $(\alpha = 0.5)$}                                                      \\ \hline
Buckingham Palace    & 0.2651  & 0.2652 & 0.04                                                    \\ \hline
Greenwich Park       & 0.2080  & 0.2081 & 0.05                                                     \\ \hline
Cambridge University & 0.0951  & 0.1084 & 13.98                                                    \\ \hline
Glasgow University   & 0.1443  & 0.1770 & 22.66                                                    \\ \hline
                     & \multicolumn{3}{c|}{$\bar{A}^\alpha \mathcal{F}_1$ $(\alpha = 0.75)$}                                                      \\ \hline
Buckingham Palace    & 0.2011  & 0.2024 & 0.65                                                    \\ \hline
Greenwich Park       & 0.1742  & 0.1743 & 0.06                                                     \\ \hline
Cambridge University & 0.0830  & 0.0866 & 4.34                                                    \\ \hline
Glasgow University   & 0.0749  & 0.0771 & 2.94                                                    \\ \hline
                     & \multicolumn{3}{c|}{$\bar{A}^\alpha \mathcal{F}_1$ $(\alpha = 1)$}                                                      \\ \hline
Buckingham Palace    & 0.1590  & 0.1595 & 0.31                                                  \\ \hline
Greenwich Park       & 0.1586  & 0.1588 & 0.13                                                   \\ \hline
Cambridge University & 0.0846  & 0.0863 & 2.01                                                     \\ \hline
Glasgow University   & 0.0722  & 0.0754 & 4.43                                                    \\ \hline
\end{tabular}}
\end{table}

\subsection{Computational Complexity}
Compared to \textsf{DBSTexC}, \textsf{F-DBSTexC} relaxes the constraints on  a point's neighborhood density. However, the computational complexity of \textsf{F-DBSTexC} is still dominated by the function RangeQuery(), and \textsf{F-DBSTexC} invokes the function exactly once for every POI-relevant data point. Therefore, the computational complexity of \textsf{F-DBSTexC} is of the same order as that of \textsf{DBSTexC}, which is  $\mathcal{O}(n^2 + nm)$. More specifically, the complexity of \textsf{F-DBSTexC} is  $\mathcal{O}(n^2)$ in a region where $m=c\times n$ for a constant $c$, and it follows $\mathcal{O}(n^{1+\beta})$  in another region where $m=n^\beta$ for $\beta>1$.

\section{Concluding Remarks}\label{SEC:Conclusion}
\label{summary}
As a generalized version of DBSCAN, we introduced \textsf{DBSTexC}, a new spatial clustering algorithm that further leverages textual information on Twitter, composed of $n$ POI-relevant tweets and $m$ POI-irrelevant tweets. The algorithm is beneficial when we aim to find clusters from geo-tagged tweets which are heterogeneous in terms of textual description since \textsf{DBSTexC} effectively excludes regions containing a huge number of undesired POI-irrelevant tweets. The computational complexity of \textsf{DBSTexC} was shown to be $\mathcal{O}(n^2)$ in a region where $m = c \cdot n$ for a constant $c>0$, and $\mathcal{O}(n^{1+\beta})$ in the other region where $m = n^\beta$ for $\beta > 1$. We demonstrated the performance of \textsf{DBSTexC} to be far superior to that of DBSCAN in terms of our performance metric $\bar{A}^\alpha \mathcal{F}_1$, where $\alpha\geq0$ is the area exponent. As a further extension, we introduced \textsf{F-DBSTexC}, which incorporates the notion of fuzzy clustering into \textsf{DBSTexC}. By fully capturing their geographic features, the \textsf{F-DBSTexC} algorithm was shown to outperform the original \textsf{DBSTexC} for the POIs located in sparsely-populated cities. The design methodology that \textsf{DBSTexC} and \textsf{F-DBSTexC} provide takes an important step towards a better understanding of jointly utilizing spatial and textual information in designing density-based clustering and towards a broad range of applications from geo-marketing to location-based services such as geo-targeting, geo-fencing, and Beacons.
\ifCLASSOPTIONcompsoc
  \section*{Acknowledgments}
\else
  \section*{Acknowledgment}
\fi This research was supported by the Basic Science Research
Program through the National Research Foundation of Korea (NRF) funded by the Ministry of Education (2017R1D1A1A09000835). The material in this paper was presented in part at the IEEE/ACM International Conference on Advances in Social Networks Analysis and Mining, Sydney, Australia, July/August 2017, and has been significantly extended based on the prior work~\cite{minh}. Won-Yong Shin is the corresponding author.

\end{document}